\def\BibTeX{{\rm B\kern-.05em{\sc i\kern-.025em b}\kern-.08em
T\kern-.1667em\lower.7ex\hbox{E}\kern-.125emX}}
\newcommand{\overbar}[1]{\mkern 1.5mu\overline{\mkern-1mu#1\mkern-1mu}\mkern 1.5mu}
\newcommand{\underbared}[1]{\mkern 1.5mu\underline{\mkern-1.5mu#1\mkern-1.5mu}\mkern 1.5mu}
\DeclareMathAlphabet\mathbfcal{OMS}{cmsy}{b}{n}
\DeclareMathAlphabet{\mathbfscr}{OMS}{mdugm}{b}{n}
\newcommand{\part}{\overbar{\mathbf{X}}}
\newcommand{\partcover}[1]{\mathbf{C} \left( \overbar{\mathbf{X}}^{#1} \right)}
\newcommand{\hashfamily}{\underbared{\mathbf{B}}}
\newcommand{\randhashfamily}{\mathbfcal{B}}
\newcommand{\hashfamilyset}{\mathbfcal{B}}
\let\originalleft\left
\let\originalright\right
\renewcommand{\left}{\mathopen{}\mathclose\bgroup\originalleft}
\renewcommand{\right}{\aftergroup\egroup\originalright}
\newtheorem{thm}{Theorem}
\newtheorem{lem}{Lemma}
\newtheorem{prop}{Proposition}
\newtheorem{defn}{Definition}
\title{Minimum Feedback for Collision-Free Scheduling in Massive Random Access}
\begin{document}

\author{Justin Kang,~\IEEEmembership{Student Member, IEEE}, 
Wei Yu,~\IEEEmembership{Fellow, IEEE}\thanks{Manuscript submitted to IEEE Transactions on Information Theory on July 29, 2020, revised on
May 11, 2021.
This work is supported by Natural Science and Engineering Research Council (NSERC).
The materials in this paper have been presented in part at the {\it IEEE
International Symposium on Information Theory (ISIT)}, June 2020.
The authors are with The Edward S. Rogers Sr. Department of Electrical 
and Computer Engineering, University of Toronto, Toronto, ON M5S 3G4, Canada.
(e-mails: jkang@ece.utoronto.ca, weiyu@ece.utoronto.ca.)}}
\maketitle

\begin{abstract}
Consider a massive random access scenario in which a small set of $k$
active users out of a large number of $n$ potential users need to be
scheduled in $b\ge k$ slots.  What is the minimum common feedback to the
users needed to ensure that scheduling is collision-free? 
Instead of a naive scheme of listing the indices of the $k$ active
users in the order in which they should transmit, at a cost of
$k\log(n)$ bits, this paper shows that for the case of $b=k$, the 
rate of the minimum fixed-length common feedback code scales only 
as $k \log(e)$ bits, plus an additive term that scales in $n$ as 
$\Theta \left(\log \log(n) \right)$ for fixed $k$. 
If a variable-length code can be used, assuming uniform activity among
the users, the minimum average common feedback rate still requires 
$k \log(e)$ bits, but the dependence on $n$ can be reduced to
$O(1)$.  When $b>k$, the number of feedback bits needed for
collision-free scheduling can be significantly further reduced.
Moreover, a similar scaling on the minimum feedback rate is 
derived for the case of scheduling $m$ users per slot, when $k \le mb$.
The problem of constructing a minimum collision-free feedback
scheduling code is connected to that of constructing a perfect hashing
family, which allows practical feedback scheduling codes to be constructed from
perfect hashing algorithms.

 \end{abstract}
\IEEEpeerreviewmaketitle

\begin{IEEEkeywords}
Massive random access, perfect hashing, hypergraph covering, scheduling.
\end{IEEEkeywords}

\section{Introduction} \label{sec:introduction}

\IEEEPARstart{T}{his} paper is motivated by the massive connectivity
scenario for machine-type wireless communications, in which a
base-station (BS) needs to provide connectivity to a massive number of $n$
devices (e.g., in the order of $10^5\sim 10^6$), but their traffic is
sporadic so that at any given time, only a random subset of $k \ll n$ 
users are active \cite{XChenCapacityManyAccess, ChenMassiveAccess}. 
We assume the following random access scheme involving three phases with
rate-limited feedback. In the first phase, $k$ active users transmit
pre-assigned uniquely identifying pilot sequences over the uplink
multiple access channel to indicate their activities. The BS uses a
multiuser detection algorithm, typically involving compressed sensing
\cite{Wunder2015SparseSignal, Chen2018SparseConnectivity,
Fengler2019NonBaysian, amalladinne2018coded}, to determine the active
set of users. In the second phase, the BS transmits a common feedback
message to all the active users over a noiseless downlink broadcast
channel; this feedback message specifies a schedule for the subsequent
data transmissions of the $k$ active users over $b$ orthogonal slots.  
In the third phase, the users transmit their payload data over the
scheduled slots.

Assuming that in the first phase the user activities
are detected perfectly, this paper focuses on the second scheduling phase
and asks the question: What is the minimum rate of the common
feedback message from the BS to the $k$ users so that the scheduling over
the $b$ slots is collision free, assuming $b \ge k$? 
An extension of the above question is the following: If $(m-1)b < k \le mb$ for
some positive integer $m$, what is the minimum feedback needed to
ensure that at most $m$ users are scheduled in each slot? 

A naive feedback scheme for collision-free scheduling is to index 
each of the $n$
users, then let the BS list off all the active user indices in the
order that they should transmit.  This requires a feedback message of
$k\log\left(n \right)$ bits. When the number of potential users is
large, the $\log(n)$ factor can be significant (e.g., 20 bits for
$n=10^6$), especially if the subsequent payload size is small, as
typically is the case for machine-type communications. 

This naive scheme is far from optimal. The results of this paper show that in the case where the feedback codewords have a fixed length regardless of
the user activity pattern, the minimum feedback rate for scheduling $k$ 
active users in $k$ slots with no collision can be reduced to approximately $k \log(e)$ 
bits plus a term that scales in $n$ as $\Theta(\log\log(n))$ bits for fixed $k$.
The above result can actually already be inferred from classic combinatorics
literature, because the problem of minimum feedback for collision-free
scheduling can be shown to be intimately connected to the hypergraph covering
and perfect hashing problems. 
This paper further shows that if the feedback
codeword length can be variable, then the scheduling overhead can be further reduced
to $k \log (e) + O(1)$ bits in expectation regardless of the probability distribution of the user activities. 
This is surprising as it implies that $k$ active users out of
$n$ potential users can be scheduled with a feedback rate of essentially 
$\log(e)$ bits (or 1 nat) per active user for arbitrarily large $n$, in
contrast to the optimal fixed-length code (or the naive scheme), which 
has a unbounded rate as $n$ tends to infinity. 
The above results can be extended to the $b>k$ case where the feedback
overhead can be significantly further reduced. 
Moreover, this paper also investigates 
the case where $b<k$ and multiple users need to be scheduled 
in each slot, which also requires less feedback as compared to the $b=k$ case. 

The \emph{scheduled} approach to random access considered in this paper can
be compared with \emph{contention} based schemes such as slotted ALOHA
\cite{Roberts1975ALOHACapture}, which, due to collision and
retransmission, has an overhead of roughly $Bk\left(\frac{1}{\eta} -
1\right)$ bits, where $B$ is the payload size and $\eta$ is the
efficiency of the chosen ALOHA variant. This efficiency varies from $\eta =
\frac{1}{e} \approx 0.37$ for classic slotted ALOHA, to $\eta \approx
0.8$ in irregular repetition slotted ALOHA
\cite{Liva2011Graph-BasedALOHA}, but can theoretically be taken arbitrarily
 close to 1 with the optimal coding scheme described in \cite{Narayanan2012}. 
The scheduled approach to random access can also be compared to \emph{unsourced}
multiple access in which the active users are detected only up to their permutations
\cite{Polyanskiy2017ARandom-access}. 
The unsourced approach can be useful for situations when users do not need to
identify themselves, for example, for fire detection.  Additionally, it also
serves as a way to abstract user identification to a higher level of the
communication protocol. For both unsourced multiple access as well as ALOHA, if
the user identifications are needed, then embedding such identification information
in the payload would cost $\log(n)$ bits. 

For the scheduled approach to massive random access, while the uplink activity
detection phase (e.g., using compressed sensing) would also generally have an
$O(\log (n))$ overhead, this paper points out that the scheduling cost in the 
downlink feedback phase can be shown to require approximately only $k \log(e)$ bits 
plus an $O(\log\log(n))$
overhead in the fixed-length case and $O(1)$ overhead in the variable-length case. 
This $\log\log(n)$ factor is reminiscent of the \emph{identification
capacity} \cite{Ahlswede1992IdentificationChannels}, but the use of 
a feedback scheduling code based on identification codes would have required 
$k \cdot \Theta(\log\log(n))$ bits, which is more costly than the feedback scheme
presented in this paper. Furthermore, the variable-length scheme proposed in
this paper has a feedback rate with
no dependence on $n$, which is a clear advantage in the regime where $n$ is large. 

The optimal feedback scheduling codes presented in this paper involve finding a
minimal family of partitions over $\left\{1, \dotsc ,n\right\}$ such that
no matter which user activity pattern occurs, there is always one
partition for which each subset of the partition contains exactly one
active user. For the case of the fixed-length code, this problem is
equivalent to the hypergraph covering problem
\cite{Snir1979TheHypergraphs} and the perfect hashing family problem
\cite{Fredman1984OnFunctions}. This allows the leveraging of classic
results in combinatorics to obtain upper and lower bounds on the
minimum feedback rate. Moreover, practical perfect hashing codes have
also been studied in the literature (e.g., \cite{Atici2000AFunction}). 
This allows us to use the code constructions in perfect hashing theory to
design practical feedback codes for optimal collision-free scheduling.

The rest of the paper is organized as follows. Section \ref{sec:setup} defines
the minimal feedback problem for collision-free scheduling, and the connection to
perfect hashing and hypergraph covering. We first focus on
the $b=k$ case and derive the achievable rates and lower bounds on the minimum
feedback in Sections \ref{sec:upper} and \ref{sec:lower}, respectively. These
results are extended to the case of $b>k$ in Section \ref{sec:tradeoff} and to
the case of $b<k$ in Section \ref{sec:multiuser}. Finally, 
construction of practical codes is discussed in Section
\ref{sec:practical}. 

The notations used in this paper are as follows. We use $\left[n \right]$ to
denote $\left\{ 1, 2, \dotsc, n\right\}$ and $\binom{\left[n \right]}{k}$ to
denote the set of all $k$-element subsets of $\left[n \right]$. All other sets
are typeset in upper case boldface, e.g. $\mathbf{X}$, while tuples of sets are typeset in upper case boldface with over-bar, e.g., $\part$,  and families of tuples are typeset with under-bar, e.g., $\hashfamily$. In addition, sets of families are typeset in boldface calligraphic letters, e.g., $\hashfamilyset$. We use $\log(\cdot)$ to denote logarithm in
base 2, and $\ln(\cdot)$ for natural logarithm in base $e$.
We use
$\mathbb{E}[\cdot]$ to denote the expected value of a random variable and $H(\cdot)$ to denote the entropy of a discrete random variable, or of a probability distribution. We use the shorthand $a^{\underline{b}} = a(a-1) \cdots (a-b+1)$. Furthermore, we use $X \sim \mathcal{U} \left( \mathbf{S}\right)$ denote a random variable which is uniformly distributed over the set $\mathbf{S}$.
 \section{Feedback for Collision-Free Scheduling} \label{sec:setup}
\subsection{Problem Formulation}

Assuming successful detection of $k$ active users among $n$ potential users at
the BS, the problem of designing a feedback code for scheduling the $k$ users
over $b$ slots can be cast as constructing an encoding function at
the BS that maps all possible occurrence of $k$-tuples out of $n$ users to a feedback message in an index set $[T]$:
\begin{equation}\label{eq:encoder}
f: \binom{\left[ n \right]}{k} \to [T]
\end{equation}
and designing a decoding function $g_i$ for each user $i$, 
which specifies each user's scheduled slot, i.e., 
\begin{equation} \label{eq:decoder}
g_i: [T] \to  \left[ b \right], 
\quad i \in \left[ n \right]. \end{equation}

For the case of $b \ge k$, we require the subsequent transmissions by the $k$
active users over the $b$ orthogonal slots to take place in a collision-free 
manner. Specifically, define an \emph{activity pattern} to be an element
$\mathbf{A} \in \binom{\left[ n \right]}{ k}$, which is a set of indices of 
$k$ active users. A feedback scheme is collision-free if
\begin{equation} \label{eq:seperation_1}
\forall \mathbf{A} \in \binom{\left[ n \right]}{ k}, \;\; \forall i \neq j \in
\mathbf{A}, \;\; 
g_i \left( f \left( \mathbf{A}\right)\right) \neq
g_j \left( f \left( \mathbf{A}\right) \right).
\end{equation}
In other words, a collision occurs whenever the decoding functions of two active
users within the same activity pattern produce the same output.

The above question can be extended to the case of $b < k$. 
Suppose $(m-1)b < k \le mb$ for some positive integer $m$. 
We define a similar condition requiring that at most $m$ users are scheduled in 
each of the $b$ orthogonal slots. Mathematically, this means that 
$\forall \mathbf{A} \in \binom{\left[ n \right]}{ k}$, we must have
\begin{equation} \label{eq:seperation_3}
\Big| \left\{ i \in \mathbf{A}, \;\; \text{s.t.}\;\; 
g_i \left( f \left( \mathbf{A}\right)\right) = d
\right\} \Big| \le m, \quad
 \forall d \in [b].
\end{equation}
Note that if the condition (\ref{eq:seperation_3}) or the collision-free condition 
(\ref{eq:seperation_1}) is satisfied for some feedback code at a fixed $k$,  
then these conditions remain satisfied if fewer than $k$ users are active. 

The rate of the feedback scheme is defined to be $R_f = \log(T)$,
when the encoder output is represented by a fixed-length code. The goal of this paper is to find minimum-rate encoding and decoding rules 
for the collision-free scheduling of the active users. To this end, we define $T^*(n,k,b)$ to be the minimum size of the encoder output alphabet such that there exists an encoder $f$ and a set of decoders $g_i$ that satisfy \eqref{eq:seperation_1} if $b \geq k$ or \eqref{eq:seperation_3} if $b < k$. Then, we define the minimum rate of a fixed-length collision-free feedback scheduling code as
\begin{equation}\label{eq:rf_star}
	R^*_f(n,k,b) \triangleq \log(T^*(n,k,b)).
\end{equation}

We can improve upon the above rate, if variable-length entropy coding is used on the encoder output. In this case, it is useful to consider a random activity pattern $\mathbf{A}$ defined over the sample space $\binom{[n]}{k}$ with probability distribution $Q(\mathbf{A})$, and further to recognize that the output of the encoding function $f(\mathbf{A})$ to ensure collision-free is not necessarily unique, so a judicious choice of output may reduce entropy. 
The variable-length feedback rate can be defined as $R_v = H(f(\mathbf{A}))$.
The rationale is that  entropy coding can be applied to the encoder output to achieve the above rate on average to within 1 bit. Our goal is to find the minimum-rate encoding and decoding rules 
for the collision-free scheduling of the active users. Note that the above rate depends on the distribution $Q (\cdot)$ as well as the choice of encoder $f$. Instead of considering a specific $Q (\cdot)$, in this work we choose to consider the worst-case $Q(\cdot)$. To this end, we define $f^*_Q$ to be an encoder that minimizes $H(f(\mathbf{A}))$ for a distribution $Q(\cdot)$ under the constraint that there exists a set of decoders $g_i$ that together with $f^*_Q$ satisfy \eqref{eq:seperation_1} if $b \geq k$ or \eqref{eq:seperation_3} if $b < k$. Then we define the minimum rate of a variable-length collision-free feedback scheduling code as
\begin{equation} \label{eq:rv_star}
 R^*_v(n,k,b)  \triangleq \sup_{Q(\cdot)} H(f^*_Q (\mathbf{A})).
\end{equation}
The two formulations \eqref{eq:rf_star} and \eqref{eq:rv_star} 
each have their utility. Feedback via a fixed-length code is easier to implement, 
but a variable-length code achieves a lower feedback rate.

The above formulation frames scheduling as  a zero-error problem, where no two users are ever assigned the same slot. As we show in the following sections, this formulation allows us to leverage existing results in combinatorics and develop concise bounds.  In the motivating massive random access problem, however, the scheduling phase is only one part of a three-phase scheme where the other two phases consist of transmissions over a noisy channel and generally have some small probability of error. Thus, non-zero end-to-end error probability is unavoidable. The study of the scheduling problem where some small probability of collision is allowed may be an interesting case for future study.

 In the first part of this paper, 
we focus on the $b=k$ case. When this is the case, we use the notation 
$R_f^*(n,k) \triangleq R_f^*(n,k,k)$ and $R_v^*(n,k) \triangleq R_v^*(n,k,k)$ 
for simplicity.  When $b > k$, the minimum rate to ensure
collision-free scheduling is clearly a non-increasing function of $b$ for 
fixed $k$ and $n$. In fact, having a large number of scheduling slots can
significantly reduce the feedback rate. This trade-off is investigated in
the second part of this paper. The third part of the paper investigates extension to the $b < k$ case.

\subsection{Naive Scheme is Suboptimal}

As mentioned earlier, a simple way to ensure collision-free scheduling is to
assign a unique index to each of $n$ users, then the feedback code simply
consists of listing the $k$ active users in the order they should
transmit. Each user finds its index in the list, waits for its turn, then
transmits at its scheduled slot. Thus, a feedback rate of $R = k \lceil \log(n)
\rceil$ is achievable. 

This naive scheme is not optimal for several reasons. 
Observe that the above simple scheme specifies a precise order in which $k$
users should transmit, but there are $k!$ collision-free schedules over the $k$
users.  It is therefore possible to use the flexibility of only having to specify
\emph{one} of the $k!$ schedules to reduce the feedback rate.
(If $b \geq k$ slots are available, then the number of collision-free schedules
increases to $\binom{b}{k}k!$.)  This is precisely what is done in
\cite{Facenda2020}, which uses enumerative source coding \cite{Cover1973} to reduce the feedback rate for collision-free scheduling.
In this case, the encoder associates one output symbol for each activity pattern,
resulting in a cost of feedback of $\log \binom{n}{k}$, which is a reduction of 
approximately $\log(k!)$ bits from the naive scheme. However, $\log \binom{n}{k}$
still scales as $O(\log(n))$ for fixed $k$.

This paper shows that significant further reduction in feedback rate is possible.
The key observation is that both the naive scheme and the enumerative source coding
scheme of \cite{Facenda2020} reveal the identities of all the active users and their scheduled slots to everyone. 
Such a feedback strategy clearly contains extraneous information from each user's perspective, as each user only needs to
know in which slot it should transmit and does not care about the schedules of the
other users. It turns out that, interestingly, even though a common feedback message needs to be designed for all users, it is possible to design an efficient feedback scheme to eliminate this extraneous information and to achieve
a rate that scales as $k \log(e)$, plus a term that scales in $n$ as $\Theta(\log \log(n))$ for fixed $k$ in the fixed-length case, or an $O(1)$ term in the variable-length case.

To put the relative savings in feedback rate in perspective, consider a
practical example with $n=10^6$ and $k=b=10^3$.  The cost of feedback using the
naive scheme is approximately $20$ kilobits. The enumerative source coding
scheme of \cite{Facenda2020} requires approximately $11.4$ kilobits. In contrast, 
the result of this paper shows that collision-free scheduling is theoretically possible with 
only about $1.46$ kilobits of feedback.

\subsection{Two-User Example}

To illustrate how to do significantly better than the $O(\log(n))$ scaling,
consider the following example of $k=b=2$ and $n \gg k$. 
The two active users are chosen uniformly at random among a larger
number of potential users.  The task is to schedule the two active users in 
two slots. The naive scheme requires $2\log(n)$ bits of feedback.

\subsubsection{Fixed-Length Code}
The following fixed-length code requires significantly less feedback.  
Index each of the $n$ users
with $\lceil \log(n) \rceil$ bits, using a binary representation of its index.
Since the binary representations of any two distinct non-negative integers must
differ in at least one position, we can use a feedback scheme that specifies the
location where the indices of the two users differ. Each user would examine the
bit value of its own index at that location. The user with bit value 0 would
transmit first; the user with bit value 1 would transmit second, thus
avoiding collision. Since specifying an index location of a vector of length $\lceil \log(n) \rceil$
requires $R_f = \lceil \log \lceil \log(n) \rceil \rceil$ bits, we achieve
$O(\log(\log(n)))$ scaling for collision-free feedback!

\subsubsection{Variable-Length Code} 

If we permit variable-length codewords, we can design a code with even lower
average rate. For simplicity, assume that $n=2^l$ for some $l$. 
Observe that in most cases the binary representation of two distinct
non-negative integers differ in many positions. This presents flexibility in
choosing the codeword.

To reduce the feedback rate, we need to choose an encoding rule that
minimizes the entropy of the encoder output. One such choice of an entropy 
minimizing encoder is the one that outputs the most significant position
where the indices of the two users differ. If the activity pattern is uniform
at random, it can be shown that the output of such an encoder is
distributed as a truncated and scaled geometric distribution. More precisely,
the probability distribution of the most significant position where two 
randomly chosen length-$l$ binary numbers differ is 
\begin{equation} \label{var_2user_pdf} 
p(t) = \frac{2^{t-1}}{\binom{n}{2}}
\left( \frac{n}{2^{t}}\right)^{2} = \frac{2^{l-t}}{2^{l}-1}, \;\; t \in \{ 1,
\dotsc , l \}.
\end{equation}
Encoding these symbols with a Huffman code, we can achieve an average rate of
\begin{equation} \label{eq:var_rate_2user} {R}_v \le 2 - \frac{\log(n)+1}{n-1}.
\end{equation}
With this scheme, the average rate is strictly less than a constant, even for 
large $n$. Indeed, the average rate $R_v \rightarrow 2$ as $n
\rightarrow \infty$.

In both cases, the key to achieving the saving in feedback rate is in assigning
multiple ``compatible" activity patterns to the same feedback output, then
defining decoding rules that result in zero collision for all ``compatible"
activity patterns.

\subsection{Reformulation via Set Partitioning}

The idea of defining ``compatible'' activity patterns can be generalized to
arbitrary $(n,k,b)$ and made rigorous using the following characterization
of an encoder and decoders. We restrict to the $b \ge k$ case here and defer
the $b < k$ case to Section \ref{sec:multiuser}. 
\begin{defn}
Define a $b$-partition of a set $\left[n\right]$ to be an ordered tuple of
subsets 
\begin{equation}
\part = (\mathbf{X}_1, \dotsc , \mathbf{X}_b)
\end{equation}
such that $\mathbf{X}_i \cap \mathbf{X}_j = \varnothing$, $\forall i \neq j$, and
$\bigcup\limits_{i=1}^{b}\mathbf{X}_i = \left[n\right]$.
\end{defn}
\begin{defn}
For fixed $\part$, define $\partcover{}$ as the following set of size-$k$ subsets of $[n]$:
\begin{equation}
\partcover{} =\\
 \left\{ \mathbf{A} \;\middle|\; \left \lvert \mathbf{A} \cap \mathbf{X}_{i}\right \rvert \leq 1, \mathbf{A} \in \binom{[n]}{k}, i=1,\dotsc, b \right\}.
\end{equation}
\end{defn}
Intuitively, these size-$k$ subsets of $[n]$ correspond to the activity patterns
for which each active user belongs to a distinct subset in the partition.  The
idea is that by specifying a $b$-partition $\part$, for all activity patterns in
$\mathbf{C} \left(\part\right)$, each active user can simply look at which
subset it belongs to in the partition, then schedule itself in the slot
corresponding to the index of that subset in a collision-free manner. 

\begin{defn}
Define a family of $b$-partitions of the set $[n]$ as an ordered collection of $T$ partitions \begin{equation}
\hashfamily = \left( \part^{(1)}, \dotsc, \part^{(T)}\right),
\end{equation}
\end{defn}
To make sure that all activity patterns are covered, we construct a family of $T$ partitions $\hashfamily$ such that
\begin{equation} \label{eq:no_err_condition} 
\bigcup \limits_{t=1}^{T} \mathbf{C} \left( \part^{(t)} \right) 
= \binom{\left[ n \right]}{k}.
\end{equation}
Then, whenever an activity pattern occurs, the BS only needs to specify a
partition that covers the activity pattern. Note that just as in the
$k=2$ example considered earlier, it is possible that multiple partitions cover
the same activity pattern. 

Fixing such a family of partitions $\hashfamily$, we now define an encoding
function which maps from the activity pattern to $[T]$ as:
\begin{equation}
f(\mathbf{A}) =  t \;\text{such that}\; \mathbf{A} \in 
\mathbf{C}\left( \part^{(t)} \right), \label{eq:encoding}
\end{equation}
and decoding functions which map from $[T]$ to the scheduling slots as:
\begin{equation}
g_i\left(t\right) = j \;\text{if}\; i \in \mathbf{X}_j^{(t)}, \text{where}\;\;
 \part^{(t)} = (\mathbf{X}_1^{(t)},...,\mathbf{X}_b^{(t)}). 
\label{eq:decoding}
\end{equation}

By \eqref{eq:no_err_condition}, for any arbitrary 
realization of $\mathbf{A}$, one can always find $t$ to satisfy the condition
 \eqref{eq:encoding}. Since at most one active user is in each subset of
 $\part^{(t)}$, \eqref{eq:decoding} guarantees that the schedule is
 collision-free.

The above set-partition view of scheduling with feedback
is completely general in the sense that any choice of deterministic
$\left(f',g'_i \right)$ that achieves zero collision for every activity pattern at a feedback rate $R$ can be written in this set-partition framework.
Let $T=\lceil 2^R \rceil $.  Given the decoding functions
$g'_i: [T] \to \left[ b\right]$,
we can define $T$ partitions $\part^{\left( t\right)} = \left(
\mathbf{X}^{\left( t\right)}_1, \dotsc , \mathbf{X}^{\left( t\right)}_b\right)$,
$t \in \left[ T\right]$, where 
\begin{equation}
\mathbf{X}^{\left(t\right)}_j = \left\{i \;\lvert \; g'_i \left( t \right) =
j, i \in \left[ n\right] \right\}.
\end{equation}
Using this construction,
partition $t$ covers precisely the activity patterns for which the feedback
symbol $t$ results in no collision. Since the set of functions $g'_i$ needs to
result in no collision for every activity pattern $\mathbf{A}$, this means that
\eqref{eq:no_err_condition} must be satisfied. Since \eqref{eq:no_err_condition}
is satisfied, and the code is collision-free, this implies that $f'$ is of
the form of \eqref{eq:encoding}.

Without loss of generality, we can therefore restrict attention to this
set-partition strategy for finding the minimum feedback rate for scheduling $k$
out of $n$ users in a collision-free manner in $b$ slots.
For the fixed-length code case, this means that the problem now reduces to finding the
minimum $T$ needed to satisfy \eqref{eq:no_err_condition}.  For the
variable-length code case, this means finding a family of partitions that satisfies
\eqref{eq:no_err_condition} and an encoding function \eqref{eq:encoding} so that 
$H \left( f \left(\mathbf{A}\right)\right)$ is minimized.

\subsection{Perfect Hashing and Graph Covering}

The minimum set-partition problem of the fixed length case with $b \geq k$ can be mapped to the
problem of finding an edge covering of a complete $k$-uniform hypergraph with a
set of complete $b$-partite subgraphs, and equivalently, the problem of finding a family
of perfect hashing functions. These connections allow us to leverage
existing results in combinatorics for tight bounds and for 
explicit feedback code constructions as discussed in Section \ref{sec:practical}.

A $k$-uniform hypergraph is a hypergraph where each hyperedge contains exactly $k$ vertices. Consider the $k$-uniform complete hypergraph $\mathcal{A} = \left( \mathbf{V},
\mathbf{E} \right)$ with $\mathbf{V} = \left[ n\right]$ and $\mathbf{E} =
\binom{\left[ n\right]}{k}$.  We can interpret the partition as a $b$-partite
complete subgraph of this hypergraph with edge
set $\partcover{}$.  Then, the question of whether every edge of a hypergraph
$\mathcal{A}$ is covered by a set of $T$ complete $b$-partite subgraphs can be
interpreted as precisely the condition \eqref{eq:no_err_condition}. Thus, finding a
family of complete $b$-partite subgraphs of $\mathcal{A}$ that covers $\mathcal{A}$
is equivalent to the minimum set-partition problem described earlier.  The
concept of graph entropy has been used to establish lower bounds on the minimum
$T^*$ required for edge covering \cite{Radhakrishnan1992ImprovedHypergraphs}.

The perfect hashing family problem goes back at least to \cite{melhorn}.
A fundamental work in this area is by Fredman and Koml{\'{o}}s
\cite{Fredman1984OnFunctions}. An $(n,b,k)$-family of perfect hash functions
is a family of functions from $[n] \to [b]$ for $n \geq b \geq k$ such
that for every $\mathbf{A} \subset[n], \left \lvert \mathbf{A} \right \lvert=k,$
there exists a function in the family that is injective on $\mathbf{A}$. An $(n,
k)$-family of minimal perfect hash functions is an $(n,b,k)$-family of perfect
hash functions where $b = k$.  We can view the decoding functions
\eqref{eq:decoder} as a family of $T$
functions from $[n] \to [b]$, if we swap the argument and the subscript.
With this interpretation, the decoding functions in this paper are exactly an $(n,b,k)$-family of minimal perfect hash functions. 
A fundamental result on perfect hash functions is:

\begin{thm}[Fredman and Koml{\'{o}}s \cite{Fredman1984OnFunctions}, 
	K\"orner \cite{Korner1986FredmanKomlosTheory},
	K\"orner and Marton \cite{Korner1988NewTheory}]
\label{thm:graph_entropy}
The minimal number of functions $T^{*}$ in an $\left(n,b,k \right)$-family of
perfect hash functions is bounded by: 
\begin{equation}\label{eq:perfect_hash}
\frac{\log n}{\min _{1 \leq s \leq k-1}  \frac{b^{\underline{s}}}{b^{s}} \log \frac{b-s+1}{k-s}} \lesssim T^{*} \lesssim \frac{\left(k -1\right) \log n}{\log \left( \frac{1}{1-\frac{b^{\underline{k}}}{b^{k}}} \right)}.
\end{equation}
\end{thm}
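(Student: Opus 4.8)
The two bounds in \eqref{eq:perfect_hash} call for quite different techniques, and I would establish them separately. The upper bound is an achievability statement and is essentially the random-coding argument already developed in Lemma \ref{lem:achieve_epsilon} and Theorem \ref{thm:fixed_achieve_bound}, now read through the perfect-hashing dictionary with $b \ge k$ colours. The lower bound, by contrast, is a genuine converse that the volume and exclusion bounds of Section \ref{sec:lower} are too crude to reach; it requires the graph-entropy method of K\"orner and Marton, which is the part I expect to be hard.

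For the \emph{upper bound}, I would draw $T$ functions $h_1, \dotsc, h_T : [n] \to [b]$ independently and uniformly at random. The probability that a fixed $\mathbf{A} \in \binom{[n]}{k}$ is injective under a single $h_t$ is exactly $g(b,k) = \prod_{j=0}^{k-1}(1 - j/b)$, so the probability that $\mathbf{A}$ is separated by none of the $T$ functions is $(1-g(b,k))^T$. A union bound over all $\binom{n}{k}$ patterns shows a perfect family exists as soon as $\binom{n}{k}(1-g(b,k))^T < 1$, i.e. whenever $T \gtrsim \frac{\log \binom{n}{k}}{\log \frac{1}{1-g(b,k)}} \sim \frac{k \log n}{\log \frac{1}{1-g(b,k)}}$, which is the right-hand side of \eqref{eq:perfect_hash} up to the leading factor. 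Sharpening that factor from $k$ to $k-1$ is the one extra step: I would fix a designated element together with a balanced base colouring and randomise only to resolve the residual collisions, thereby replacing the $\binom{n}{k}$ in the exponent by a count whose logarithm is of order $(k-1)\log n$.

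For the \emph{lower bound}, I would set up the graph-entropy argument. To each $h_t$ associate the structure $G_t$ recording the $k$-subsets it separates; a perfect family is then exactly a cover of the complete $k$-uniform hypergraph by the $G_t$, as in \eqref{eq:no_err_condition}. Fixing a level $1 \le s \le k-1$, I would define an entropy functional (with the uniform distribution over separated $s$-configurations as the auxiliary measure) for which, first, the target complete structure has entropy at least of order $\log n$, and, second, by subadditivity the total entropy is at most $\sum_t \mathrm{H}(G_t)$. Bounding each single-function contribution $\mathrm{H}(G_t) \le g(b,s)\log\frac{b-s+1}{k-s}$ then yields $T^* \ge \frac{\log n}{g(b,s)\log\frac{b-s+1}{k-s}}$ for every $s$; since each $s$ gives a valid lower bound, I would keep the strongest one, which is precisely the $\min_{1 \le s \le k-1}$ appearing in \eqref{eq:perfect_hash}.

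The main obstacle is the second half of the converse: showing that a single $b$-colouring can contribute at most $g(b,s)\log\frac{b-s+1}{k-s}$ to the chosen entropy functional. This demands the correct auxiliary distribution and a careful convexity and optimisation estimate over the colour classes, and it is exactly where the Fredman--Koml\'{o}s counting and the K\"orner--Marton graph-entropy machinery do the real work. Rather than reprove these per-function estimates from scratch, I would invoke them directly and restrict the present argument to the hashing translation of \eqref{eq:no_err_condition} and the final optimisation over $s$.
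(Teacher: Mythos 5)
Your overall plan matches how the paper treats this statement: the paper does not prove Theorem \ref{thm:graph_entropy} at all, but imports it from \cite{Fredman1984OnFunctions}, \cite{Korner1986FredmanKomlosTheory}, \cite{Korner1988NewTheory}, remarking only that the upper bound follows the same random-coding template as Theorems \ref{thm:fixed_achieve_bound} and \ref{thm:trade_off_fixed_rate_achievable}, and that the lower bound rests on K\"orner's graph-entropy machinery (with Nilli's simpler probabilistic proof \cite{nilli1994perfecthash} noted). So your decision to invoke the K\"orner--Marton per-function entropy estimates rather than reprove them, your translation between the set-partition condition \eqref{eq:no_err_condition} and perfect hash families, and your observation that each $s \in \{1,\dotsc,k-1\}$ yields a valid converse so one keeps the strongest (the $\min$ in the denominator of \eqref{eq:perfect_hash}) are all consistent with the paper and with the cited literature.

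The genuine gap is in the one step you claim as your own: the sharpening of the upper bound from $k\log n$ to $(k-1)\log n$. Fixing a designated element together with a balanced base colouring and randomising only to resolve residual collisions does not work: a perfect family must separate \emph{every} $k$-subset, not merely those through a designated element, and after fixing any single balanced colouring a $\left(1-g(b,k)\right)$-fraction of all $\binom{n}{k}$ subsets, i.e.\ $\Theta(n^k)$ of them, remain unseparated, so the ensuing union bound still produces the factor $k\log n$. The standard repair, and the one actually used by Fredman and Koml\'{o}s, is the deletion (alteration) method: draw the $T$ random functions on a larger ground set of size $N=cn$ with $c>1$; the expected number of unseparated $k$-subsets is $\binom{N}{k}\left(1-g(b,k)\right)^{T}$, and removing one element from each such subset leaves at least $n$ elements on which the family is perfect provided $\binom{N}{k}\left(1-g(b,k)\right)^{T}\le (c-1)n$. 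Since $\binom{N}{k}/N = O\left(N^{k-1}\right)$, this condition solves to $T \ge \frac{(k-1)\log N + O(1)}{\log\frac{1}{1-g(b,k)}}$, which is exactly the right-hand side of \eqref{eq:perfect_hash}. With this substitution your upper-bound argument closes; the lower-bound roadmap is fine as a sketch, but, as you acknowledge, the per-colouring estimate $g(b,s)\log\frac{b-s+1}{k-s}$ is the real content and legitimately lives in the cited papers.
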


The notation $U(n) \lesssim V(n)$ means $U(n) \leq (1+ o(1)) V(n)$, where the
$o(1)$ term tends to zero as $n$ tends to infinity for fixed $b,k$.  The upper
bound in Theorem \ref{thm:graph_entropy} is derived using a random hashing argument.
The proof of the lower bound is more involved, and draws
heavily from graph theory. In particular, the proof uses the notion of the entropy of hypergraphs together with other information theoretic arguments. Nilli \cite{nilli1994perfecthash} later showed that it is possible to arrive at the same
result with a simpler probabilistic argument.

Although Theorem \ref{thm:graph_entropy} can already be directly applied to our minimum 
feedback problem for collision-free scheduling for the case of fixed-length codes 
with $b \geq k$, in order to keep this work self-contained and accessible
without requiring a background in combinatorics and graph theory, 
we present simpler versions of these bounds and their proofs for the
aforementioned cases in Sections \ref{sec:upper} and \ref{sec:lower}. For the
$b < k$ or variable-length cases, novel bounds and new proofs are presented.  To distinguish bounds
that can be derived from prior classic works in perfect hashing
from the new results, we state the former as propositions and the latter as theorems.

We remark
that perfect hashing families are also connected to a related, albeit different,
random-access problem \cite{Berger1981,Hajek1987,Korner1988RandomAccess}.
Additionally, both hashing and random access are often framed as classic
balls-into-bins problems \cite{Johnson1977}. For example, the throughput of the classic slotted ALOHA algorithm is related to the probability of collision in the balls-into-bins problem.
Similarly, balls-into-bins problems are also used in the analysis of hashing algorithms.  \section{Achievable Minimum Feedback Rate for $b = k$} \label{sec:upper}
We begin with the $b=k$ case and aim to find the minimum common feedback rate
required for collision-free scheduling of $k$ out of $n$ users into $k$ slots,
i.e., $R_{f}^{*}(n,k)$ for the case of fixed-length code and $R^{*}_{v}(n,k)$ for the case of variable-length code.

\subsection{Random Partition}

The main challenge in designing a feedback scheduling code is the explicit construction
of a family of partitions to cover all activity patterns.
In this section, we propose a random code construction.
The key idea is to bound the probability that a randomly chosen family of $T$
$k$-partitions of $[n]$ satisfies the collision-free condition
\eqref{eq:no_err_condition}. 
For the case of fixed-length codes, if for some integer $T$, this probability is 
strictly greater than zero, this would imply the existence of one collision-free 
feedback code at a rate $R_f=\log(T)$, hence an upper bound on $R^{*}_{f}(n,k)$. 
For the case of variable-length code, we consider a greedy encoder 
that satisfies \eqref{eq:no_err_condition}
and upper bound the entropy of the output of the encoder, 
thus obtaining an upper bound on $R^{*}_{v}(n,k)$. The following bound is useful for both the fixed and variable-length code cases.

\begin{lem}\label{lem:achieve_epsilon}
Let $\part^{(1)}, \dots, \part^{(T)}$ be a family of $k$-partitions of the set
$[n]$, where each partition is uniformly and independently chosen over the
set of all $k$-partitions of $[n]$ with $k < n$. Let $T_{\epsilon}^*(n,k)$ be the smallest 
integer such that the probability that this
family satisfies condition \eqref{eq:no_err_condition} is at least $1-
\epsilon$ with $0 < \epsilon < 1$. Then, $T_{\epsilon}^{*}(n,k)$ is upper bounded
by
\begin{equation}  \label{eq:achieve_epsilon}
T_{\epsilon}^{*}\left(
n,k\right) \leq \ln{\left(\frac{1}{\epsilon} \binom{n}{k}\right)
\left( \frac{k^k}{k! }\right)}.
\end{equation}
\end{lem}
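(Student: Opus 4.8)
The plan is to apply the probabilistic method through a first-moment union bound over all activity patterns. The first step is to identify the distribution of a uniformly random $k$-partition of $[n]$. Since a $k$-partition $\part = (\mathbf{X}_1, \dots, \mathbf{X}_k)$ is an ordered tuple of disjoint subsets covering $[n]$, it is in bijection with the set of labelings $[n] \to [k]$, of which there are $k^n$. Consequently, drawing a uniform $k$-partition is equivalent to assigning each element of $[n]$ an independent, uniform label in $[k]$, indicating which part it belongs to.

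Next I would compute, for a fixed activity pattern $\mathbf{A} \in \binom{[n]}{k}$, the probability that a single random partition $\part$ covers it, i.e.\ that $\mathbf{A} \in \mathbf{C}\left(\part\right)$. Because $\lvert \mathbf{A} \rvert = k$ equals the number of parts, coverage requires that the $k$ elements of $\mathbf{A}$ receive $k$ \emph{distinct} labels. Under the i.i.d.\ labeling model, there are $k!$ favorable label assignments to these $k$ elements out of $k^k$ total (the remaining $n-k$ elements being unconstrained and cancelling), so the per-partition covering probability is $p = k!/k^k$, independent of which $\mathbf{A}$ is chosen.

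With this in hand, the event that a fixed $\mathbf{A}$ is left uncovered by an independent family of $T$ partitions has probability $(1-p)^T$. A union bound over the $\binom{n}{k}$ possible activity patterns then shows that condition \eqref{eq:no_err_condition} fails with probability at most $\binom{n}{k}(1-p)^T$. Applying the inequality $1-p \le e^{-p}$ and requiring $\binom{n}{k}\,e^{-Tp} \le \epsilon$, I would solve for $T$ to obtain $T \ge \frac{k^k}{k!}\ln\!\left(\frac{1}{\epsilon}\binom{n}{k}\right)$; any integer $T$ at least this threshold guarantees coverage with probability at least $1-\epsilon$, which bounds the smallest such integer $T^{*}(n,k,\epsilon)$ and yields \eqref{eq:achieve_epsilon}.

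Since this is essentially a routine union-bound computation, the main subtlety is not any single hard estimate but fixing the correct model for a ``uniform $k$-partition.'' The crux is recognizing that allowing empty parts, equivalently i.i.d.\ uniform labels, is the natural choice that makes the per-pattern probability exactly $k!/k^k$ and hence produces the $k^k/k!$ factor in the bound. A secondary point to handle carefully is the integrality of $T^{*}$: the displayed bound omits a ceiling, so I would state it up to rounding, noting that the omitted ceiling does not affect the $k\log(e) + \Theta(\log\log(n))$ scaling extracted from it later.
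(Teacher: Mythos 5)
Your proposal is correct and takes essentially the same approach as the paper: the identical i.i.d.-uniform-labels model for a random $k$-partition giving per-pattern covering probability $p = k!/k^k$, followed by a first-moment bound --- your union bound is just the paper's expected-number-of-uncovered-patterns argument (Markov on a nonnegative integer random variable) stated directly --- and the same $1-p \le e^{-p}$ step to solve for $T$. Your caveat about the missing ceiling on $T^*$ is a fair minor observation that applies equally to the paper's statement and has no effect on how the lemma is used downstream.
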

\begin{IEEEproof}
Random $k$-partitions of $[n]$ uniformly distributed over the set
of all partitions can be constructed in the following way. 
For each element in $[n]$, we draw a uniform random variable in $[k]$ to 
denote which subset this element belongs to. If $\part{}$ is constructed 
randomly this way, then fix some $\mathbf{A} \in \binom{\left[ n\right]}{k}$, 
the probability that $\mathbf{A}$ is covered by $\partcover{}$, i.e., 
each of the $k$ elements in $\mathbf{A}$ lies in a distinct subset of $\part^{}$, 
is $\frac{k}{k}\times\frac{(k-1)}{k} \times \cdots \times \frac{1}{k}$. This is to say that 
\begin{equation}
\label{eq:prob_random_covering}
\mathrm{Pr}\left( \mathbf{A} \in  \partcover{} \right) = \frac{k!}{k^k} \triangleq p.
\end{equation}
If $T$ random $k$-partitions are generated independently, the probability that
none of the $T$ partitions cover this given activity pattern is
\begin{equation} \label{eq:prob_uncovered}
\mathrm{Pr} \left( \mathbf{A} \notin \bigcup
\limits_{t=1}^{T} \partcover{(t)} \right) = \left(1 - p\right)^T.
\end{equation}
By applying the union bound, we find that the probability that all $\mathbf{A} \in \binom{\left[ n \right]}{k}$ are covered by the $T$ partitions is bounded as:
\begin{equation}\label{eq:union_bound_lem1}
\mathrm{Pr} \left(  \binom{\left[ n \right]}{k} \neq \bigcup
\limits_{t=1}^{T} \partcover{(t)} \right) \leq \binom{n}{k}\left(1 - p\right)^T .
\end{equation}

We require the above probability to be less than $\epsilon$. Using the fact that $(1 - x) < e^{-x}, \forall x > 0$, this
gives us the following sufficient condition on $T$ that ensures
\eqref{eq:no_err_condition} is satisfied by a random family of $T$ partitions with probability at least $1 - \epsilon$:
\begin{equation}
\binom{n}{k} \exp \left(-{pT}\right) \leq \label{eq:lt_eps}
\epsilon.
\end{equation}
Taking logarithm of both sides and substituting $p=\frac{k!}{k^k}$ yields 
\begin{equation}\label{eq:lemma1_final}
T \geq \ln \left( \frac{1}{\epsilon}\binom{n}{k}
\right)\left(\frac{k^k}{k!}\right).
\end{equation}
Hence, $T_{\epsilon}^*(n,k)$ must be bounded above as in \eqref{eq:achieve_epsilon}.
Note that for $1 \leq k < n$, the right-hand side of the above expression is always
greater than 1. \end{IEEEproof}

\subsection{Fixed-Length Code}

We now provide an upper bound on the minimum rate 
of a collision-free fixed-length feedback scheduling code  $R^*_{f}(n,k)$.

\begin{prop}\label{thm:fixed_achieve_bound}
For $k < n$, the minimum rate of a fixed-length collision-free feedback scheduling code $R^*_{f}(n,k)$ is upper bounded as
\begin{equation}\label{eq:above_bound}
R_{f}^*(n,k) \leq
k\log(e) +\log \left( \ln \left(\frac{n}{k}\right) +1 \right) +
\frac{1}{2}\log{\left(\frac{k}{{2\pi}} \right)}.
\end{equation}
Thus, for massive random access, there exists a fixed-length feedback 
code for scheduling $k$ out of $n$ users in $k$ slots with no collision at
the above rate, which scales as $\log(e)$ bits per active user,  
plus a term that scales in $n$ as $O(\log(\log(n)))$ for fixed $k$. 
\end{prop}

\begin{IEEEproof}
Since by Lemma \ref{lem:achieve_epsilon} for any choice of $\epsilon$
with $0 < \epsilon < 1$, a randomly chosen family of $T_{\epsilon}^{*}(n,k)$ 
partitions would satisfy \eqref{eq:no_err_condition} with a non-zero
probability of $1-\epsilon$, this implies the existence of at least one family of partitions with
rate $\log \left(T_{\epsilon}^{*}\left(n,k\right)\right)$ that results in collision-free scheduling.
In other words, the rate
\begin{equation} 
R = \log \left(  \ln \left( \frac{1}{\epsilon} \binom{n}{k} \right)
	\left(\frac{k^k}{k!}\right)\right) 
\end{equation}
is achievable for any $0 < \epsilon < 1$. Take the minimum of the above by letting $\epsilon \rightarrow 1$. This means that any rate
\begin{equation} \label{eq:fixed_rate_inf}
R > \log \left(  \ln \left( \binom{n}{k} \right)\left(\frac{k^k}{k!}\right)\right) 
\end{equation}
is achievable. 
Noting that $\binom{n}{k} < \frac{n^k}{k!}$, \eqref{eq:fixed_rate_inf} ensures that if 
\begin{equation} 
R \geq k\log(k) -\log(k!) + \log\left(\ln{
\frac{n^k}{k!}}\right), 
\end{equation} 
then there must
exist a collision-free feedback code for scheduling $k$ out of $n$ users.
Using the fact that $k! > \sqrt{2 \pi} k^{k + \frac{1}{2}} e^{-k}
e^{\frac{1}{12k + 1}}$, we arrive at the following sufficient condition on $R$
for the existence of a collision-free fixed-length feedback scheduling code:
\begin{equation}\label{eq:prop1_ratebound}
 R \geq k \log \left( e \right) +  \log\left( \ln \left(
\frac{n}{k}\right) +1 \right) + \frac{1}{2} \log{\left(\frac{k}{2 \pi}\right)}.
\end{equation}
 The achievability of the above rate means that the minimum rate
$R^*_{f}(n,k)$ must be upper bounded by the right-hand-side of \eqref{eq:prop1_ratebound}, which gives (\ref{eq:above_bound}).
Thus the minimum rate scales at most as $\log(e)$ bits per active user, plus an
additive term that scales in $n$ as $O \left(\log \log \left( n \right)\right)$ for fixed $k$.
\end{IEEEproof}

The minimum rate collision-free feedback scheduling codes are closely related to
perfect hash families. The argument of the proof above is similar to the arguments
used for proving bounds in the perfect hash function literature
\cite{Fredman1984OnFunctions}. 
The key observation here is that for a fixed-length code, this random 
coding bound results in an achievable rate that has an $O\left( \log \log (n)\right)$ scaling in $n$ for fixed $k$. 
We next show that
if a variable-length code is used, the average rate can be further 
reduced so as to remove even this $\log \log (n)$ dependence. 

\subsection{Variable-Length Code}

We now upper bound the minimum rate of collision-free variable-length feedback scheduling
code $R_v^*(n,k)$.  To bound the achievable rate for the
fixed-length code, only the value of $T$ is important. However, for the 
variable-length code case, the important quantity is 
$H\left(f \left( \mathbf{A}\right)\right)$, where $\mathbf{A}$ 
is assumed to follow some distribution, and the output of $f(\mathbf{A})$ is judiciously chosen. More specifically, since
the encoder typically has flexibility in choosing which
partition to use to cover the activity pattern, we also must fix a
particular choice of encoding function in order to characterize the entropy of
the encoder output. For this purpose, we define the following ``greedy
encoder", which among all encoding functions that satisfy
\eqref{eq:encoding} tends to produce a highly skewed output
distribution, hence giving a lower output entropy.

\begin{defn}
Given a family of $T$ $k$-partitions $\hashfamily = (\part^{(1)}, \dotsc, \part^{(T)})$, define the greedy encoder $f_{\hashfamily}$:
\begin{equation}\label{eq:greedy_encoder}
\begin{aligned}
f_{\hashfamily}(\mathbf{A}) = \\ \;\;  
\end{aligned}\;\;\;\;\; 
\begin{aligned} \min_{t \in [T]} \quad
& t\\ \mathrm{s.t.} \quad & \mathbf{A} \in \partcover{(t)}.
\end{aligned}
\end{equation}
In the case where not all activity patterns are covered by the family of 
partitions $\hashfamily$, i.e., \eqref{eq:no_err_condition} is not satisfied,
the minimization problem may be infeasible for some $\mathbf{A}$. In this case, we take $f_{\hashfamily}(\mathbf{A}) = T+1$.
\end{defn} 

To characterize the output entropy of $f_{\hashfamily}(\mathbf{A})$ for 
a particular $\hashfamily$ is not easy. Instead, we study the behavior 
of $f_{\hashfamily}(\mathbf{A})$ over all possible $\hashfamily$'s in
order to show the existence of a good collision-free variable-length code. 
The main result of this section is the following. 

\begin{thm}\label{thm:achieve_source_coded}
For $k \leq n$, the minimum rate of a collision-free variable-rate feedback scheduling code $R^{*}_{v}(n,k)$ is upper bounded as
\begin{equation}\label{eq:var_rate_upper}
R^{*}_{v}(n,k) \le \left( k  +1\right)\log \left( e \right).
\end{equation}
Thus for massive random access, there exists a
variable-length feedback code for scheduling $k$ out of $n$ users into
$k$ slots with no collision and an average rate as
expressed in \eqref{eq:var_rate_upper}, which scales as $\log(e)$ bits per active user
and is independent of $n$.
\end{thm}

\begin{IEEEproof}
The main idea is to consider the average output entropy of the greedy encoder over
all families of partitions, regardless of whether they satisfy the covering
property \eqref{eq:no_err_condition}.  Then, we separate all families of
partitions into two sets: one over families of partitions that satisfy the
covering property \eqref{eq:no_err_condition}, and the other over families that
do not.  By accounting for the contribution to the average entropy of the
families that do not satisfy \eqref{eq:no_err_condition}, it is possible to
bound the average output entropy of the families that do, which in turn allows
us to establish the existence of one family of partitions and a corresponding 
encoder with the required rate and the zero-collision property.

Toward this end, we denote the set of all families of $k$-partitions of $[n]$
of size $T$ as $\hashfamilyset (n,k,T)$.  Fix $n$ and $k$. We choose some
$\epsilon \in (0,1)$ and set $T$ to be
$T_{\epsilon}^*(n,k)$ as defined in Lemma \ref{lem:achieve_epsilon}. 
We aim to study the output of the greedy encoder $f_{\hashfamily}$ where
$\hashfamily \in \hashfamilyset \left(n,k,T_{\epsilon}^{*}\left(n,k\right)\right)$. 
For the rest of the proof, we shall refer to 
$\hashfamilyset \left(n,k,T_{\epsilon}^{*}\left(n,k\right)\right)$ simply as 
$\hashfamilyset$ for brevity. 

Specifically, for each $\hashfamily \in \hashfamilyset$,  
for activity patterns $\mathbf{A}$ distributed according to $Q(\mathbf{A})$ and 
under the greedy encoder, define
\begin{equation}
p_{\hashfamily}(t) =  \mathrm{Pr}\left( f_{\hashfamily} (\mathbf{A}\right) = t ), \quad t=1,\cdots,T.
\end{equation}
Note that since many of these families of partitions in $\hashfamilyset$
do not cover all $\mathbf{A}$, we define
\begin{equation}
p_{\hashfamily}(T+1) =  1- \sum_{t=1}^T p_{\hashfamily}(t)
\end{equation}
in order to make $p_{\hashfamily}(t)$ a probability distribution over $\{ 1,\cdots,T, T+1 \}$. 

Now, let $\hashfamily$ be chosen at random from a uniform distribution over all
possible families of partitions $\mathcal{U}\left( \hashfamilyset\right)$. 
For such a random partition, define
\begin{equation}
p_{\randhashfamily}(t) \triangleq \mathbb{E}_{\randhashfamily} \left[ p_{\hashfamily}(t)\right].
\end{equation}
Since the entropy
function is concave, Jensen's inequality implies
\begin{equation}
H  ( p_{\randhashfamily} ) = H
\left( \mathbb{E}_{\randhashfamily} \left[p_{\hashfamily} \right] \right) \geq
\mathbb{E}_{\randhashfamily}\left[ H(p_{\hashfamily})\right].
\end{equation}
Let $\hashfamilyset_1$ represent the set of all $\hashfamily \in \hashfamilyset$
that satisfy \eqref{eq:no_err_condition}, while $\hashfamilyset_2$ represents
all $\hashfamily \in \hashfamilyset$ that do not satisfy
\eqref{eq:no_err_condition}. We can decompose the expectation over $\randhashfamily
\sim \mathcal{U}\left(\hashfamilyset \right)$ into an expectation over
$\randhashfamily_1 \sim \mathcal{U} \left( \hashfamilyset_1 \right)$ and
$\randhashfamily_2 \sim \mathcal{U}\left( \hashfamilyset_2 \right)$. If $\alpha$ is the fraction of
$\hashfamily \in \hashfamilyset$ that satisfy \eqref{eq:no_err_condition}, we have
\begin{IEEEeqnarray}{RCl}
H(p_{\randhashfamily})& \geq &
\alpha \mathbb{E}_{\randhashfamily_1} \left[ H  (p_{\hashfamily_1} )\right] 
+ (1 - \alpha)\mathbb{E}_{\randhashfamily_2} \left[H ( p_{\hashfamily_2} )\right]\\
& \geq & \alpha \mathbb{E}_{\randhashfamily_1} \left[ H
(p_{\hashfamily_1})\right]. 
\end{IEEEeqnarray}
Now, we use Lemma \ref{lem:achieve_epsilon} to lower bound $\alpha$. Since $\hashfamilyset$  is chosen such that there are $T_{\epsilon}^*(n,k)$ partitions in each family, Lemma \ref{lem:achieve_epsilon} says that at least a fraction of $(1 -\epsilon)$ of the families in $\hashfamilyset$ satisfy \eqref{eq:no_err_condition}. Therefore, $\alpha > 1-\epsilon$ and
\begin{equation}
H(p_{\randhashfamily}) \geq (1 - \epsilon) \mathbb{E}_{\randhashfamily_1} \left[ H (p_{\hashfamily_1})\right].
\end{equation}
Since the minimum entropy $H ( p_{\hashfamily_1})$ for $\hashfamily_1 \in \hashfamilyset_1$ cannot exceed its average $\mathbb{E}_{\randhashfamily_1}\left[ H(p_{\hashfamily_1})\right]$, we must have
\begin{equation}
\label{eq:min_hash_entropy}
\min_{\hashfamily_1 \in \randhashfamily_1} H ( p_{\hashfamily_1}) \leq \frac{1}{1 - \epsilon}H(p_{\randhashfamily}).
\end{equation} 
Thus, if we can bound $H(p_{\randhashfamily})$, this would give us a bound on 
the output entropy of the family of partitions with the lowest rate that satisfies the zero-collision condition (\ref{eq:no_err_condition}). 

It remains to bound $H(p_{\randhashfamily})$. The first
partition $\part^{(1)}$ of a random $\hashfamily \in \randhashfamily$ is equally likely to be
\textit{any} $k$-partition of $[n]$. Thus, for an $\mathbf{A}$ which
consists of $k$ elements of $[n]$, the probability that each of the $k$
elements lies in a distinct subset of $\part^{(1)}$ is as computed in (\ref{eq:prob_random_covering}), i.e., \begin{equation}
p_{\randhashfamily}(1) = \mathrm{Pr}\left(\mathbf{A} \in \partcover{(1)}\right) = \frac{k!}{k^k}.
\end{equation}
Since we employ a greedy encoder, for $t=2,\dots, T$, we have
\begin{multline}\label{eq:greedy_enc_unsimplified}
\mathrm{Pr}\left( f_{\randhashfamily} \left(\mathbf{A}\right) = t\right) \\
= \mathrm{Pr}\left(\mathbf{A} \in \partcover{(t)}\right) \prod \limits_{\tau=1}^{t-1}
    \mathrm{Pr}\left(\mathbf{A} \notin \partcover{(\tau)}\right),
\end{multline}
but since each partition in $\hashfamily \sim \mathcal{U}(\randhashfamily)$ is i.i.d., this simplifies to
\begin{equation}\label{greedy_expect}
p_{\randhashfamily}(t) = \frac{k!}{k^k}\left( 1 -
\frac{k!}{k^k}\right)^{t-1}, \quad t=1, \dotsc, T.
\end{equation}
Note that
$p_{\randhashfamily}(t)$ matches the geometric distribution with parameter
$\frac{k!}{k^k}$ for $t = 1, \cdots, T$, and has the remainder of the mass at $t = T+1$.
The entropy of the geometric distribution
serves as an upper bound to the entropy of $p_{\randhashfamily}(t)$, i.e.,
\begin{IEEEeqnarray}{rCl}
 H(p_{\randhashfamily}) & = & -\sum_{t = 1}^{T+1} p_{\randhashfamily}(t) \log\left(p_{\randhashfamily}(t)\right) \\
&\leq & -\left(\log
    \left( \frac{k!}{k^k}\right) + \left( \frac{k^k}{k!} - 1 \right)\log \left(
    1 - \frac{k!}{k^k}\right)\right). \label{eq:geometric_limit}
\end{IEEEeqnarray}
Note that the above bound is independent of $Q(\mathbf{A})$, so by \eqref{eq:min_hash_entropy}
the following rate is achievable for all $Q(\mathbf{A})$:
\begin{equation}\label{eq:unsimplified_var_achieve}
R =  \frac{1}{1 - \epsilon}\left(\log \left( \frac{k^k}{k!}\right) 
+ \left( 1 - \frac{k^k}{k!} \right)\log \left( 1 - \frac{k!}{k^k}\right)\right).
\end{equation}
Noting that $k! > k^k e^{-k}$ and
\begin{equation}
	\label{eq:p_inequality_loge}
\left(1- \frac{1}{p} \right)\log \left(1 - p \right) < \log(e), \quad \forall\; 0 < p < 1,
\end{equation} 
this shows
\begin{equation} \label{eq:simplified_var_achieve}
R = \frac{1}{1 - \epsilon} \left( k    +  1\right)\log (e) 
\end{equation}
is achievable.
Since $\epsilon$ can be chosen arbitrarily, we can take $\epsilon \rightarrow 0$. 
As the minimum feedback rate must be no larger than any achievable rate, and this rate is achievable for any distribution of $\mathbf{A}$, this gives 
(\ref{eq:var_rate_upper}).
\end{IEEEproof}

To summarize, this section shows that while for a fixed-length code,
the random coding bound results in $O(\log \log(n))$ scaling for the
minimum rate, for the variable-length codes, the minimum rate is
independent of $n$.  This is surprising as it implies that even as $n$
grows to infinity, there exists an $O(k)$ feedback scheme that results
in no collision. Interestingly, the proof of the variable-length code case
shows that the way to achieve this is to take $\epsilon \rightarrow
0$, which implies $T_{\epsilon}^*(n,k) \rightarrow \infty$.  In other
words, to minimize the output entropy, it is better to use families of
large size $T$ so as to give more flexibility to the encoder in order
to produce a more skewed output distribution. This is in contrast to
the fixed-length case where $T$ is to be minimized and $\epsilon$
should be taken to 1.

 \section{Converse for Minimum Feedback Rate for $b = k$} \label{sec:lower}
We now present converse results showing that a feedback code for scheduling $k$
out of $n$ users in $k$ slots must have a minimum rate with at least a linear
scaling in $k$ for both the
variable and fixed-length case and a double-logarithmic scaling in $n$ for the fixed-length
case. The first result is based on the so-called volume bound, which is known in, e.g.,
\cite{Fredman1984OnFunctions} for the case of fixed-length codes. 
This argument can be extended to the variable-length case by considering 
the volume bound as a constraint on the distribution of $f(\mathbf{A})$ 
as $\mathbf{A} \sim Q(\mathbf{A})$ and over possibly random $f$.

\begin{thm}\label{thm:var_rate_vol}
For $k \leq n$, the minimum rate of a variable-length collision-free feedback scheduling code is bounded below as
\begin{equation}
	\label{eq:var_volume_bound}
R^{*}_{v}(n,k) \geq k \log(e) - \frac{1}{2} \log\left( 2 \pi k\right) - \frac{\log(e)}{12k}
- \log \left( \frac{n^{k}}{n^{\underline{k}}}\right).
\end{equation}
The same bound also applies to the fixed-length code case, i.e., to $R^{*}_{f}(n,k)$.
Thus, with either fixed or variable-length codes, scheduling $k$ random users 
out of a total of $n$ users into $k$ slots
in a massive random access scenario with no collision requires an average rate
that scales at least as $\log(e)$ bits per active user in the regime where $k\ll n$.   
\end{thm}

\begin{IEEEproof}
The number of activity patterns covered by a partition is maximized when the
sizes of the subsets of the partition take integer values surrounding $\frac{n}{k}$.
In particular, we can show that 
\begin{equation}
\left \lvert \partcover{\left( t\right)} \right \lvert \leq
\left \lceil{\frac{n}{k}}\right \rceil^{n\; \text{mod}\; k} \left
	\lfloor{\frac{n}{k}}\right \rfloor^{k- (n\; \text{mod}\; k)} \leq 
\left( \frac{n}{k}\right)^{k} \triangleq c_{\max}.
\label{eq:c_max}
\end{equation} 
Thus for the fixed-length code case, in order to cover all the activity patterns, 
i.e., to satisfy condition \eqref{eq:no_err_condition}, we must have 
\begin{equation}
T^{*}(n,k) \geq \frac{\binom{n}{k}}{\left( \frac{n}{k}\right)^{k}}.
\end{equation}
This bound is not necessarily tight, because the covering sets 
$\partcover{\left(  t \right)}$ are not necessarily disjoint, but it already 
provides the desired linear scaling bound for the fixed-length case. 
If we use the upper bound 
$k ! < \sqrt{2 \pi} k^{k+\frac{1}{2}} e^{-k} e^{\frac{1}{12 k}}$, we get
the right-hand side of \eqref{eq:var_volume_bound} for 
the fixed-length code case, i.e., for $R^{*}_{f}(n,k) = \log(T^*(n,k))$.

For the variable-length code case, consider an encoder that produces outputs 
according to \eqref{eq:encoding} for a family of partitions that satisfy
\eqref{eq:no_err_condition}. Define $v(n,k)$ to be the maximum probability 
that any such encoder $f$ could produce any given output $t$, i.e.,
let $v(n,k) = \max \limits_{f,t} \mathrm{Pr}\left(f ( \mathbf{A} )
= t\right)$ as $\mathbf{A} \sim Q(\mathbf{A})$. Since according to (\ref{eq:c_max}) the maximum number of 
activity patterns covered by any singe partition is bounded by 
$c_{\max} = \left( \frac{n}{k} \right)^k$, if we
order the fixed activity patterns $\mathbf{A}_i \in \binom{[n]}{k}$ such that
$Q(\mathbf{A}_1) \geq Q(\mathbf{A}_2) \geq \dots$, then we have
\begin{equation}\label{eq:volume_def}
v(n,k) \leq \sum_{i \leq c_{\max}} Q(\mathbf{A}_i). \end{equation}
Note that the above bound holds even if the encoding process is non-deterministic. 

Consider now the optimum variable-length encoder. 
Let $p^*(t) = \mathrm{Pr}\left( f^*(\mathbf{A}) = t\right)$. 
Since $v(n,k)$ is an upper bound on the probability that an activity pattern is mapped to $t$, we must have $p^*(t) \leq v(n,k) \; \forall t$. Therefore,
\begin{multline}\label{eq:log_p_vol_bound}
H ( p^*(t)  )  =  - \sum_t p^*(t) \log \left( p^*(t) \right) \\
 \geq  - \sum_t p^*(t) \log \left( v \left( n,k\right) \right) 
 =  - \log \left( v \left( n,k\right) \right). 
\end{multline}
Taking supremum of both sides yields
\begin{IEEEeqnarray}{rCl}
\sup_{Q(\cdot)}H ( p^* (t) )  & \geq & - \inf_{Q(\cdot)} \log(v(n,k)) \\
& \geq & - \log\left( \inf_{Q(\cdot)} \sum_{i \leq c_{\max}} Q(\mathbf{A}_i) \right) \\
\label{eq:var_volume_bound_unsimplified_3}
& = & - \log\left( \frac{c_{\max}}{\binom{n}{k}} \right) \\
& = & \log \left( \frac{k^k}{k!}\right) - \log \left( \frac{n^k}{n^{\underline{k}}}\right),
\label{eq:var_volume_bound_unsimplified}
\end{IEEEeqnarray}
where \eqref{eq:var_volume_bound_unsimplified_3} follows from the fact that the infimum is achieved by $Q\left( \mathbf{A}_i\right) = \binom{n}{k}^{-1}$ $\forall i\in \binom{[n]}{k}$, since any other choice would only increase the sum in \eqref{eq:volume_def}. Finally, using $k ! < \sqrt{2 \pi} k^{k+\frac{1}{2}} e^{-k} e^{\frac{1}{12 k}}$ gives
(\ref{eq:var_volume_bound}) as desired.

Note that the last term in \eqref{eq:var_volume_bound} is close to zero in the regime of interest (i.e., $k \ll n$). Thus, the minimum feedback rate must scale at least linearly in $k$ in this regime.
\end{IEEEproof}

With this result, we now have both upper and lower bounds on
$R^{*}_{v}(n,k)$. It is instructive to examine these bounds in the limit 
as $n \rightarrow \infty$ for fixed $k$. The unsimplified forms of the upper bound 
\eqref{eq:unsimplified_var_achieve} combined with (\ref{eq:p_inequality_loge}) and the lower bound
\eqref{eq:var_volume_bound_unsimplified} with $\frac{n^k}{n^{\underline{k}}} \rightarrow 1$ as $n \rightarrow \infty$ reveal that
\begin{equation}
\log \left( \frac{k^k}{k!}\right) + \log(e) \geq 
\lim_{n \rightarrow \infty} R^{*}_{v}(n,k) 
\geq \log\left( \frac{k^k}{k!}\right).
\end{equation}
Both bounds scale as $k\log(e)$ and the difference between the two bounds is an extra $\log(e)$ term, placing a fairly tight bound on the minimal average feedback rate.

Finally, we present a lower bound on $R^{*}_{f}(n,k)$ which says that
the rate must have at least an $O(\log\log(n))$ scaling.
\begin{prop} \label{thm:fixed_exclusion} 
For $1 < k \leq n$, the minimum rate of a fixed-length collision-free feedback scheduling code $R_{f}^*(n,k)$ is bounded below as
\begin{equation}\label{lower_2}
R^{*}_{f}(n,k) \geq  \log\log \left(\frac{n}{k-1}\right) + \log(k)-1. 
\end{equation}
Thus, in the fixed-length code case, scheduling $k$ out of a total of $n$ users over $k$ slots in a massive
random access scenario with no collision requires a feedback rate that scales
at least double logarithmically in $n$ for fixed $k$.
\end{prop} 
\begin{IEEEproof}
Consider the first partition.  We seek to bound the number of activity
patterns that this first partition cannot cover by noting that
\begin{equation}
\partcover{\left( 1 \right)} \bigcap \binom{\left[
n\right] - \mathbf{X}^{(1)}_j}{k} = \varnothing,  \;\; j = 1,
\dotsc, k.
\end{equation}
This is to say that $\part^{\left( 1 \right)}$ cannot
cover an activity pattern which has all its elements drawn from $[n]
- \mathbf{X}^{(1)}_j$. Since the partition must have at least one
subset $\mathbf{X}^{(t)}_j$ of size at most $\left
\lfloor{\frac{n}{k}}\right \rfloor$, it must be that $\part^{\left(
1 \right)}$ cannot have covered any activity patterns whose elements
are exclusively drawn from a set of indices of size $m_1 \left( n,
k\right)$, where 
\begin{equation}\label{eq:structure_bound}
 m_1(n, k) \geq n - \left \lfloor{\frac{n}{k}}\right \rfloor \geq n \left( 1 -
\frac{1}{k}\right).  
\end{equation} 
Now take the second partition, and consider how many of the above
activity patterns are still not covered by the second partition. By
the same logic, since the second partition cannot cover any activity
patterns drawn from an index set with indices from one of the
subsets of the partition removed, and when restricted to the set of
indices of size $m_1(n,k)$, there is at least one subset which
overlaps with at most $\left(1-\frac{1}{k}\right)$ portion of
$m_1(n,k)$ indices, we conclude that all the activity patterns whose
elements are drawn from an index set of size $m_2(n,k)$, where
\begin{equation}
 m_2(n, k) \geq n \left( 1 - \frac{1}{k}\right)^2,
\end{equation}
cannot possibly be covered by either the first
partition or the second partition.  Continuing for $T$ partitions,
the only way that the remaining indices cannot support any activity
patterns is for $m_{T}(n,k) \leq k-1$. This gives us the following
necessary condition on $T$: 
\begin{equation}
 n \left( 1 - \frac{1}{k}\right)^{T} \leq k-1.
\end{equation}
Since $T^*(n,k)$ must be
greater than or equal to any $T$ that satisfies the above, by taking
the logarithm of the above, we have 
\begin{equation} 
T^{*}(n,k) \geq \frac{\log(n) - \log(k-1)}{\log(k) - \log(k-1)}.
\label{Snir_bound}
\end{equation}
 In terms of rate, by taking
the logarithm again and by noting that  $-\log \left( 1 -
\frac{1}{k}\right) \leq \frac{2}{k}$ for $k > 1$, we get the
desired result, 
\begin{equation}
 R^{*}_{f}(n,k) \geq  \log\log \left(\frac{n}{k-1}\right) + \log (k)-1. \end{equation}
Thus for fixed $k$, the minimum feedback rate for
zero collision must scale at least double logarithmically in $n$.
\end{IEEEproof}
Note that in the context of hypergraph covering,
\eqref{Snir_bound} can be interpreted as Snir's bound
\cite{Snir1979TheHypergraphs}.
 \section{Feedback Rate for $b>k$} \label{sec:tradeoff}

We have so far discussed the case where the number of scheduling slots
$b$ is equal to the number of users $k$, but the case of $b > k$ 
is also of interest. For example,
communication networks must be designed for peak load, so there may be 
time periods throughout the day when the number of users is reduced but the number of
available slots remain fixed, leading to a situation where $b > k$.
When $b>k$, collision-free
scheduling becomes easier, so we would expect the minimum required
feedback rate to be lower as compared to the $b=k$ case.  Indeed when
the number of transmission slots is equal to $n$, each of the $n$ potential
users can be assigned a unique slot, so no feedback is required to
prevent collision. Larger $b$, of course, leads to waste of
transmission resources. Smaller $b$, such as $b=k$, wastes no slots, 
but requires larger feedback rate in order to avoid collision.  In this section, we study
the trade-off between $b$ and the minimum feedback rate, when $k<b<n$.

To characterize the minimum feedback rate to ensure collision-free
scheduling for the $b>k$ case, we extend the results of the previous 
sections. These extensions are straightforward, with the exception of Proposition
\ref{thm:fixed_exclusion}, where the previous proof relies on the fact that 
there are exactly $k$ slots. This case is however already covered in Theorem 
\ref{thm:graph_entropy}, which contains 
the $\log \log (n)$ scaling result for the case of $b \geq k$. The
proofs of the extensions of the other bounds are included in the appendix.
Note that these bounds are presented in
simplified forms which are similar to the bounds of the previous sections,
but due to the approximations used, they do not coincide with previous 
results in the $b=k$ case.

\begin{prop}\label{thm:trade_off_fixed_rate_achievable}
For $k < b \leq n$, the minimum rate of a fixed-length collision-free feedback scheduling code $R^*_{f}(n,k,b)$ is upper bounded as 
\begin{multline}\label{eq:achievable_fixed_tradeoff}
R^{*}_{f}(n,k,b) \leq k \log(e) + \log \left(\ln \left(\frac{n}{k} \right)+1\right) \\ + (b-k)\log \left(1-\frac{k}{b}\right) 
+ \log(k) + 1.
\end{multline}
\end{prop}

\begin{thm} \label{thm:trade_off_var_rate_achievable}
For $k < b \leq n$, the minimum rate of a variable-length collision-free feedback scheduling code $R^*_{v}(n,k,b)$ is upper bounded as
 \begin{equation}\label{eq:trade_off_var_rate_achievable}
R^{*}_{v}(n,k,b) \leq
 (k + 1) \log(e) + (b-k) \log \left( 1 - \frac{k}{b}\right) + 1.
\end{equation} 
\end{thm}

\begin{thm}\label{thm:trade_off_var_rate_volume}
For $k < b \leq n$, the minimum rate of a variable-length collision-free feedback scheduling code $R^*_{v}(n,k,b)$ is lower bounded as
\begin{multline}
R^{*}_{v}(n,k,b) \geq \\
k \log(e) 
+ \left( b-k + \frac{1}{2}\right)\log \left(1 - \frac{k}{b} \right)
 - \log \left( \frac{n^k}{n^{\underline{k}}}\right).
\end{multline}
The same bound also applies to the fixed-length code case, i.e., to $R^{*}_{f}(n,k,b)$.
\end{thm}

Comparing these bounds to the bounds for 
the case of $b = k$,
they all still have the same leading $k \log(e)$ term,
but these new bounds also have a term similar to $(b-k)\log \left( 1 - \frac{k}{b}\right)$ which decreases the required feedback rate for $b > k$. 
Using the achievable rate of the fixed-length case \eqref{eq:achievable_fixed_tradeoff}
as an example, suppose $b = \beta k$ for some $\beta > 1$, the reduction in feedback rate
is 
\begin{equation}
(b-k)\log \left( 1 - \frac{k}{b}\right) 
= - k \cdot \underbrace{(\beta - 1) \log \left( \frac{\beta}{\beta-1} \right)}_{\triangleq c}.
\end{equation}
It can be verified that $c < \log(e)$, and $c \rightarrow \log(e)$, as $\beta \rightarrow \infty$.
Thus, comparing (\ref{eq:achievable_fixed_tradeoff}) with (\ref{eq:above_bound}), the minimum feedback needed to
schedule $k$ active users in $b$ slots without collision is reduced from
essentially $\log(e)$ bits per user for the case of $b=k$ to $$\log(e)-c$$ bits
per user for $b>k$.  For example, if $\beta=2$, then $c=1$. We save 1 bit per 
active user.

In \figurename~\ref{fig:tradeoff}, the fixed-length achievabililty and converse
bounds are plotted for $n=10^6$ and $k=1000$. 
(The bounds for the variable-length code are omitted in the plot, because in this regime they are very close to the fixed-length bounds.)
At $b=1000$, the minimum
feedback rate needed to ensure collision-free scheduling is about 1457 bits, which is close to $\log(e)$ bits per user. 
If $b$ is increased to 2000, the amount of feedback required is reduced to 457
bits, a reduction of 1 bit per active user.
Increasing $b$ further reduces the feedback rate
even more, although eventually there is diminished return. 
In the context of the three-phase massive random access scheme, this means that the duration of the second phase where the BS uses the feedback message to schedule the users can be reduced at the expense of additional slots in the third phase.

\begin{figure}[t]
\centering
\includegraphics[width=\columnwidth]{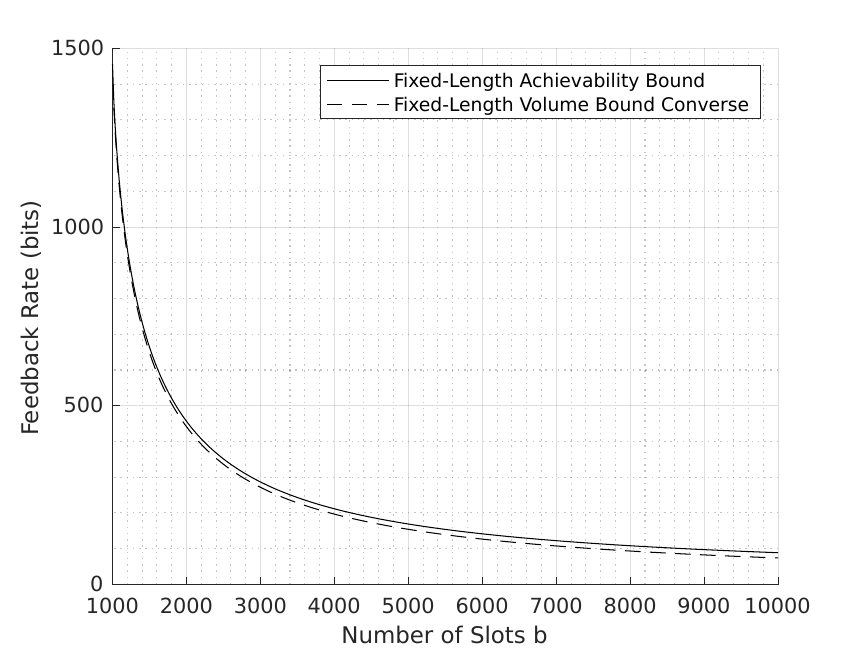}
\caption{Trade-off between the number of slots $b$ and the minimum collision-free
	feedback rate for $n=10^6$ and $k=1000$.}
\label{fig:tradeoff}
\end{figure}

 \section{Feedback Rate for $b<k$} \label{sec:multiuser}
In the previous sections we have analyzed the amount of feedback a BS must
provide in order to schedule $k$ active users into $b \geq k$ slots in 
a collision-free manner. In this section, we consider the case where only
$b < k$ slots are available. Clearly, multiple active users must now be
scheduled in the same slot, but if the BS receiver can tolerate up to 
$m$ users in each slot in the subsequent data transmission phase 
(e.g., by using a multiuser detector), we can define a similar
``collision-free'' condition. Let $m>1$ be an integer such that
\begin{equation}
(m-1)b < k \le mb.
\end{equation}
Our goal becomes to construct a feedback code such that no more than $m$ users
are scheduled in any single slot. Toward this end, we need to first modify 
the definition of a partition covering $\mathbf{C}\left( \part \right)$
in the collision-free condition \eqref{eq:no_err_condition} to the following:
\begin{equation}
\mathbf{C}\left( \part, m \right) = \\
 \left \{ \mathbf{A} \middle| \left \lvert \mathbf{A} \cap \mathbf{X}_i \right \rvert \leq m, \mathbf{A} \in \binom{[n]}{k}, i=1,\dotsc, b \right\}.
\end{equation}
Under this definition, an activity pattern is covered if no more than $m$ active user indices occupy each subset of the partition. Naturally the zero-error condition becomes
\begin{equation}\label{eq:no_err_m_users}
\bigcup \limits_{i=1}^{T} \mathbf{C} \left( \part^{(i)}, m \right) = \binom{[n]}{k}.
\end{equation}
Then, a collision-free encoder should satisfy
\begin{equation}
f(\mathbf{A})  =  t \;\;\text{s.t.}\;\; \mathbf{A} \in 
\mathbf{C}\left( \part^{(t)}, m \right) \label{eq:encoding_m_users}.
\end{equation}
Using these definitions and forming arguments similar to the case of $b=k$, we can establish upper and lower bounds on the minimum average feedback rate such that no more than $m$ of the $k$ active users access each of the $b$ transmission slots. The theorems below establish bounds on the minimum rate for both the fixed-length and variable-length codes. Proofs can be found in the appendix. 

\begin{thm}\label{thm:multi_slot_fixed_rate_achievable}
For  $b \leq k \leq n$, the minimum rate of a fixed-length collision-free feedback scheduling code $R^{*}_{f}(n,k,b)$
is upper bounded as
\begin{equation}
R^{*}_{f}(n,k,b) \leq b \left(\frac{1}{2}\log\left( 2 \pi m\right) + \frac{\log(e)}{12m} \right) \\ 
+ \log \left( \ln \left( \frac{n}{mb}\right) + 1 \right),
\end{equation}
where $m$ is an integer such that $(m-1)b < k \le mb$.
\end{thm}

\begin{thm}\label{thm:multi_slot_var_rate_achievable}
For  $b \leq k \leq n$, the minimum rate of a variable-length collision-free feedback scheduling code $R^{*}_{v}(n,k,b)$
is upper bounded as
\begin{equation}\label{eq:multi_slot_var_rate_achievable}
R^{*}_{v}(n,k,b) \leq  \left( b + \frac{1}{m}\right) \left( \frac{1}{2}\log(2\pi m) + \frac{\log(e)}{12m}\right),
\end{equation}
where $m$ is an integer such that $(m-1)b < k \le mb$.
\end{thm}

\begin{thm}\label{thm:multi_slot_var_rate_volume}
For  $b \leq k \leq n$ and $k=mb$, 
the minimum rate of a variable-length collision-free feedback scheduling code  $R^{*}_{v} \left(n,k,b \right)$
is lower bounded as
 \begin{multline}
R^*_{v}\left(n,k,b\right) 
\geq b \left( \frac{1}{2} \log\left( 2 \pi m \right)
+ \frac{\log(e)}{12m + 1}\right)
 - \frac{1}{2}\log \left( 2 \pi k\right) \\
  - \frac{\log(e)}{12 k}
 - \log(\gamma \left(n',k,b \right)),
\end{multline}
where  $n' =  \left \lfloor \frac{n}{k}\right \rfloor k$, and 
    $\gamma(n',k,b) =
    \prod_{l=1}^{m} \frac{\left(n'-lb\right)^b}{\left(
    n'-lb\right)^{\underline{b}}}$,
which goes to $1$ as $n$ goes to infinity.
The same bound also applies to the fixed-length code case, i.e., to $R^{*}_{f}(n,k,b)$.
\end{thm}

\begin{thm}\label{thm:multi_slot_fixed_rate_exclude}
For  $1 < b \leq k \leq n$, the minimum rate of a fixed-length collision-free feedback scheduling code $R^{*}_{f}(n,k,b)$
is lower bounded as
\begin{equation}
R^*_f(n,k,b) \geq \log \log \left( \frac{n}{k-1}\right) + \log(b) -1.
\end{equation}
\end{thm}

\begin{figure}[t]
\centering
\includegraphics[width=\columnwidth]{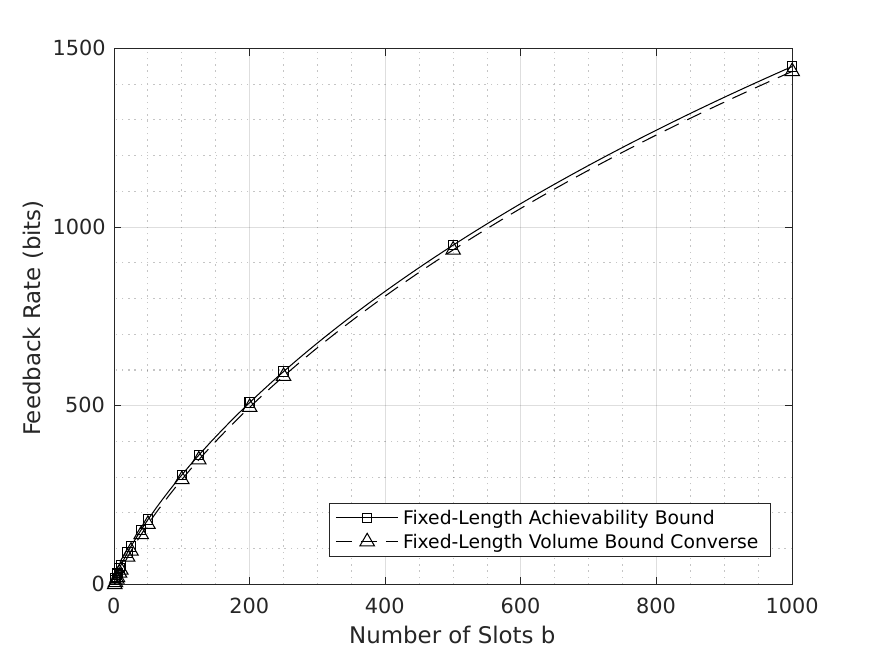}
	\caption{The minimum feedback rate needed to schedule $k=1000$ active users out of $n=10^6$ potential users into $b$ slots with no more than $m=\lceil \frac{k}{b}\rceil$ users in each slot using a fixed-length feedback code. Markers are placed at points where $\frac{k}{b}$ is an integer. Lines connecting the markers are interpolated bounds based on Theorems \ref{thm:multi_slot_fixed_rate_achievable} and \ref{thm:multi_slot_var_rate_volume}.}
\label{fig:multi_user_slots}
\end{figure}

Note that these bounds do not exactly coincide with bounds for $b=k$ when
$m=1$, as they appear to have different linear scaling coefficients in $k$.
But the difference is small. For achievability, the coefficient $\frac{1}{2} \log(2
\pi) + \frac{\log(e)}{12}$ differs from $\log(e)$ by about $0.003$. 
This minor difference can be
attributed to the differing approximations of the factorial terms. 

\figurename~\ref{fig:multi_user_slots} shows that the amount of feedback 
required to schedule $k$ users into $b=k/m$ slots with $m$ users per slot
decreases as $m$ increases and $b$ decreases. This is because when $b$
is smaller, less information is needed to specify which slot each user 
should be scheduled in. As the theorems above show, if $m$ is large, 
the saving is about a factor of $O(\log(m)/m)$ for each user.

This way of scheduling multiple users into the same scheduling slot is not
without practical concerns, however. 
When each user is given their own slot, the BS can easily determine which user
sent which message.  However, if $m$ users transmit their payload in the same
slot, even if the BS can decode the payloads of the $m$ users, without
additional identifying information, potentially embedded in the payload, the BS would be
unable match users to their payloads. For the set-partition based feedback
scheduling scheme, because the users do not know which other users are active
and which slots the other active users are being scheduled into, it may take 
up to $O\left(\log \left(n/b \right)\right)$ additional bits per user 
for the active users to identify themselves, in effect shifting the 
$O(\log(n))$ saving in the feedback stage to the payload.
For this reason, the analysis for the case of $m>1$ 
is most interesting when the BS has a means of distinguishing users
that transmit in the same slot.
 \section{Practical Codes via Perfect Hashing} \label{sec:practical}
As the minimum-rate collision-free feedback strategy is closely related to 
the perfect hashing problem, we can use perfect hash families to specify 
the encoding and decoding functions for feedback messages in massive 
random access and leverage the perfect hashing literature to design the
collision-free scheduling codes.
The primary focus of the perfect hashing literature has been to develop practical hash
families and functions for storage applications. These
practical hashing families are evaluated based on the criteria of
\textit{storage space, query complexity} and \textit{build complexity}. Storage
space is the average amount of information required to indicate which function from
the family is collision-free on a specified set of keys. Query complexity is the computational complexity required to
evaluate the hash function. Build complexity (or build time) is the computational complexity of finding a
function within the family which results in no collisions. These criteria
are also important in the context of massive random access; they
correspond to feedback rate, decoding complexity and encoding complexity
respectively of the feedback scheduling code. 

To the best of the authors' knowledge, there are no known practical perfect hashing schemes that 
approach the random coding achievability bound in average rate while maintaining
a sub-exponential build time in $k$. In terms of rate, in the perfect hashing literature, it is often assumed that
the total number of keys, $n$, is such that $\log(n) \ll k$.  This means that
both the achievability and converse bounds for both the fixed and the variable-length
cases are $O(k)$. Due to this, the metric used for storage space is ``bits
per key", which is the coefficient of linear scaling in $k$. For the case of $b=k$ the best known
construction with non-exponential build time is the boolean satisfiability (SAT) based
method presented in \cite{weaver2019constructing} which has an average storage
space of $1.83k$, as opposed to the $\log(e)k \approx 1.44k$ scaling shown to be achievable by
Theorem \ref{thm:fixed_achieve_bound}. This implementation has both a build
time of less than 1 second and a query time in the hundreds of nanoseconds for $k=2^{16}$ in
modern hardware, making it suitable for low-latency applications.

The achievability bounds of this paper show that source coding can significantly
reduce the average rate required for collision-free feedback in some regimes of $n$ and $k$.  There are several
practical hash family schemes that make use of this fact to reduce their average
rate. For example the \textit{Compressed Hash-Displace} (CHD) \cite{BotelhoCHD2009} algorithm takes what would
otherwise be an $O(k \log\log(k))$ algorithm in terms of storage complexity and during the \textit{compress} phase
reduces the memory requirement to $O(k)$. Although the coefficient of the linear scaling of the overall algorithm is $2.07$ for the case of $b=k$, which is higher than the optimal $\log(e)k \approx 1.44k$ scaling as in Theorem \ref{thm:achieve_source_coded}, the CHD algorithm has the advantage that it can 
be adapted to other values of $b$, the results of which are presented in Table \ref{tab:practical_linear_scaling}. 
These practical schemes can provide considerable feedback rate saving if
used for scheduling in the massive random access context. 

\begin{table}
\centering
\caption{Linear Scaling Factors of Hashing Algorithms}
\resizebox{!}{!}{\begin{tabular}{c c c}
\toprule
Method & Bits Per Key & Load Factor \\ [0.5ex] \midrule
Random Coding & 1.44 & \multirow{3}{5em}{$b=k$}\\
SAT & 1.83 &\\
CHD & 2.07 & \\ \midrule
Random Coding & 0.89 & \multirow{2}{5em}{$b=1.23k$}\\
CHD & 1.40 & \\ \midrule
Random Coding & 0.44 & \multirow{2}{5em}{$b=2k$}\\
CHD & 0.69 & \\
\bottomrule
& & \\
\end{tabular}
}
\label{tab:practical_linear_scaling}
\end{table}

 \section{Conclusions}
This paper considers the problem of finding minimum-rate feedback strategies
for scheduling $k$ out of $n$ users in $b$ slots for massive random access. Both
the cases of fixed and variable-length feedback codes are considered. For the
fixed-length case, the main contributions of this paper are the formulation of
this problem as a set-partition problem and in showing the connection of this
problem to the perfect hash function problem. By forming this connection, we
establish that the optimal feedback strategy for the $b=k$ case must have a rate
that scales linearly in $k$ with an additive double logarithmic term in $n$ using fixed-length codes. For the variable-length code case, we present a novel proof
which shows the existence of a collision-free feedback code with an average rate
which depends only linearly in $k$ and has no dependence on $n$. 
The linear scaling factor in both cases is $\log(e)$ bits (or 1 nat). 
This paper also extends the results to when $b \geq k$ slots are available 
and to the case when $b=\frac{k}{m}$ slots are available and no more 
than $m$ users can be scheduled per slot. In both cases, the minimum feedback 
rate is shown to scale linearly in $k$ but with a smaller scaling factor, plus
a double logarithmic term in $n$ for the fixed-length case. 
In conclusion, the minimum number of feedback bits needed for collision-free
scheduling for massive random access is essentially $\log(e)$ bits (or 1 nat)
per active user for the $b=k$ case and smaller for $b>k$ or $b<k$ cases. 
This theoretical limit is much lower than the naive feedback scheme that
requires $\log(n)$ bits per active user. Practical codes toward this limit 
have been devised in the perfect hashing literature, and can 
be used for scheduling in massive random access applications.  
\appendix 
\subsection{Proof of Proposition \ref{thm:trade_off_fixed_rate_achievable}}
Fix some $\mathbf{A} \in \binom{\left[ n\right]}{k}$. If
we choose a $b$-partition at random, the probability that $\mathbf{A}$ is
covered by $\partcover{}$ can be written as
\begin{equation}
\mathrm{Pr}\left( \mathbf{A} \in  \partcover{} \right)  =
\frac{b^{\underline{k}}}{b^k} \triangleq p.
\end{equation}
By Lemma \ref{lem:achieve_epsilon}, we have \eqref{eq:lt_eps}. Making the substitution of $p=\frac{b^{\underline{k}}}{b^k}$ and taking the logarithm of both sides yields
\begin{equation}
T^{*}(n,k,\epsilon) \leq \ln \left(\frac{1}{\epsilon}\binom{n}{k}\right)\left(\frac{b^{\underline{k}}}{b^k}\right).
\end{equation}
This means that any rate satisfying the below is achievable:
\begin{equation}
    R \geq \log \left( \ln\left( \binom{n}{k}\right) \frac{b^{k}}{b^{\underline{k}}} \right).
\end{equation}
Noting that $\binom{n}{k} < \frac{n^k}{k!}$, we have
\begin{equation}\label{eq:random_coding_with_tradeoff_unsimplified}
    R \geq k \log(b) - \log \left( b^{\underline{k}}\right) + \log \left(
    \ln\left( \frac{n^k}{k!}\right)\right).
\end{equation}
Using the fact that $\sqrt{2 \pi}x^{x+\frac{1}{2}}e^{-x} < x! < \sqrt{2 \pi}x^{x+\frac{1}{2}}e^{-x}e^{\frac{1}{12x}}$ for all positive integers, we show that for $b > k$:
\begin{IEEEeqnarray}{lRl}
b^{\underline{k}} & = & \frac{b!}{(b-k)!}\\
& > & \frac{\sqrt{2 \pi}b^{b + \frac{1}{2}}e^{-b}}{\sqrt{2 \pi}
      (b-k)^{b-k+\frac{1}{2}}e^{-(b-k)}e^{\frac{1}{12(b-k)}}} \\
& > & e^{-k} b^{b + \frac{1}{2}}(b-k)^{-b+k-\frac{1}{2}}e^{-\frac{1}{12(b-k)}}\\
& > & e^{-k} b^k \left(1-\frac{k}{b}\right)^{-b+k-\frac{1}{2}}e^{-\frac{1}{12(b-k)}}.
\end{IEEEeqnarray}
Now taking the logarithm and noting that $-\frac{1}{2} \log \left( 1- \frac{k}{b}\right) > 0$ and $-\frac{\log(e)}{12(b-k)} > -1$ since $b>k$, we find
\begin{equation}\label{eq:log_factorial_bound}
\log \left( b^{\underline{k}}\right) > -k \log(e) + k\log(b) \\
+ \left(-b+k\right) \log \left( 1 - \frac{k}{b} \right)- 1.
\end{equation}
Together with \eqref{eq:random_coding_with_tradeoff_unsimplified}, this implies the existence of a collision-free feedback code with rate
\begin{equation} \label{eq:min_achieveable_rate_tradeoff}
R > k \log(e) + \log \left(\ln \left(\frac{n}{k} \right)+1\right) \\
 + (b-k)\log \left(1-\frac{k}{b}\right) +1.
\end{equation}
Since any achievable rate must be greater than or equal to the minimal rate, the minimal rate must satisfy (\ref{eq:achievable_fixed_tradeoff}).
\hfill$\blacksquare$

\subsection{Proof of Theorem \ref{thm:trade_off_var_rate_achievable}}
The main difference in the case of $b > k$ slots is that we have
\begin{equation}
\mathrm{Pr} \left( \mathbf{A} \in \partcover{(t)}\right) = \frac{b^{\underline{k}}}{b^{k}}.
\end{equation}
Just as in the case of $b=k$, we use the entropy of a geometric distribution to bound the entropy of the minimum-entropy greedy encoder $H ( f_{\hashfamily^*} ( \mathbf{A} ) )$. The parameter of the geometric distribution is $ \frac{b^{\underline{k}}}{b^{k}}$, thus
\begin{multline}
H ( f_{\hashfamily^*}( \mathbf{A} ) ) \leq
- \frac{1}{1- \epsilon}\left(\log \left( \frac{b^{\underline{k}}}{b^k}\right) \right. \\ +
\left. \left( \frac{b^{k}}{b^{\underline{k}}}-1\right) 
\log \left( 1 - \frac{b^{\underline{k}}}{b^k}\right)\right).
\end{multline}
Using \eqref{eq:log_factorial_bound} and noting that 
$\left( 1 - \frac{1}{x}\right) \log \left( 1-x\right) < \log(e)$, $\forall x \in (0,1)$,
this expression becomes
\begin{multline}
H ( f_{\hashfamily^*} ( \mathbf{A} ) ) \leq 
\frac{1}{1- \epsilon} \bigg(  (k+1) \log(e)  \\
\left. + \left(b-k\right) \log \left( 1 - \frac{k}{b}\right)+ 1 \right).
\end{multline}
Following the same logic as the proof of Theorem \ref{thm:achieve_source_coded}, this shows that the rate on the right-hand side of the above expression is achievable. 
Finally, since $\epsilon$ can be arbitrary, taking $\epsilon \rightarrow 0$ completes the proof of \eqref{eq:trade_off_var_rate_achievable}.
\hfill$\blacksquare$

\subsection{Proof of Theorem \ref{thm:trade_off_var_rate_volume}}

As in the case of $b = k$, the number of activity patterns covered by a single partition is maximized when there are an equal number of indices in each subset, or as close to this as possible, given that the number of elements in each subset must be an integer value. In particular, we can show that
\begin{equation}
\left \lvert \partcover{(t)} \right \rvert \leq \binom{b}{k} \left(\frac{n}{b}\right)^{k}.
\end{equation}
Thus, since all partitions must be covered, $T^*(n,k,b)$ must satisfy
\begin{equation}
T^{*}(n,k,b) \geq \frac{\binom{n}{k}}{\binom{b}{k} \left( \frac{n}{b}\right)^k} =  \left(\frac{b^k}{b^{\underline{k}}} \right) \left( \frac{n^{\underline{k}}}{n^k}\right) = c_{\max}.
\end{equation}
Taking the logarithm and writing in terms of rate, we have
\begin{equation}\label{eq:b_gt_k_converse_unsimplified_f}
R^{*}_{f}(n,k,b) \geq \log \left( \frac{b^{k}}{b^{\underline{k}}}\right) - \log\left( \frac{n^{k}}{n^{\underline{k}}} \right).
\end{equation}
For the variable-length case, the same technique from the case of $b=k$ in Theorem \ref{thm:var_rate_vol} applies except with this new $c_{\max}$.
Taking the supremum of the bound \eqref{eq:log_p_vol_bound}, which is achieved by $Q(\mathbf{A}_i) = \binom{n}{k}^{-1}$ yields:
\begin{equation}\label{eq:b_gt_k_converse_unsimplified_v}
R^*_v(n,k,b) \geq
\log \left( \frac{b^{k}}{b^{\underline{k}}}\right) - \log \left( \frac{n^k}{n^{\underline{k}}}\right),
\end{equation}
exactly matching \eqref{eq:b_gt_k_converse_unsimplified_f}.
Finally, note that  $\sqrt{2 \pi}x^{x+\frac{1}{2}}e^{-x}e^{\frac{1}{12x+1}} < x!$ $ < \sqrt{2 \pi}x^{x+\frac{1}{2}}e^{-x}e^{\frac{1}{12x}}$ 
for all positive integers $x$. This shows that for all $b > k$,
\begin{IEEEeqnarray}{lRl}
b^{\underline{k}} & = & \frac{b!}{(b-k)!}\\
& < & \frac{\sqrt{2\pi}b^{b+ \frac{1}{2}}e^{-b}e^{\frac{1}{12b}}}{\sqrt{2 \pi} (b-k)^{\left(b-k + \frac{1}{2}\right)}e^{-(b-k)}e^{\frac{1}{12(b-k)}}}\\
& < & e^{-k}b^{k} \left( 1 - \frac{k}{b}\right)^{\left(b-k + \frac{1}{2}\right)}e^{\frac{1}{12b}-\frac{1}{12(b-k)}}\\
& < & e^{-k}b^{k} \left( 1 - \frac{k}{b}\right)^{\left(b-k + \frac{1}{2}\right)}.\label{eq:log_factorial_leq}
\end{IEEEeqnarray}
so we have
\begin{equation} \label{eq:falling_fact_bound}
\log \left( \frac{b^{k}}{b^{\underline{k}}}\right)  \geq
k \log(e) 
+ \left( b-k + \frac{1}{2}\right)\log \left(1 - \frac{k}{b} \right).
\end{equation}
Substituting \eqref{eq:falling_fact_bound} into \eqref{eq:b_gt_k_converse_unsimplified_f} and \eqref{eq:b_gt_k_converse_unsimplified_v} completes the proof.
\hfill$\blacksquare$
 \subsection{Proof of Theorem \ref{thm:multi_slot_fixed_rate_achievable}}
Fix some $\mathbf{A} \in \binom{[n]}{k}$. We show achievability for $k=mb$. The same coding scheme works equally for all $k \le mb$. Let a $b$-partition $\part$ be chosen at random. We wish to determine the probability
\begin{equation}
\mathrm{Pr} \left( \mathbf{A} \in \mathbf{C} \left( \part, m \right) \right) \triangleq p.
\end{equation}
There are a total of $b^k$ ways that the $k$ elements of $\mathbf{A}$ can be distributed across the $b$ subsets that define $\part$. The number of ways that the $k$ elements of $\mathbf{A}$ can be distributed among the $b$ subsets of $\part$ such that $\left \lvert \mathbf{A} \cap \mathbf{X}_i \right \rvert = m $ for $i=1, \dotsc, b$ is 
\begin{equation}
\binom{mb}{m} \binom{m(b-1)}{m} \cdots \binom{m}{m} = \frac{k!}{\left(m!\right)^b}.
\end{equation} Therefore, we have
\begin{equation}
p = \frac{k!}{b^k\left( m!\right)^b}.
\end{equation}
The rest of the proof follows the same logic as the proof of Proposition \ref{thm:fixed_achieve_bound}.
From \eqref{eq:lt_eps}, we see that a random family of $T$ partitions, where $T$ satisfies
\begin{equation}
T \geq \ln \left( \frac{1}{\epsilon}\binom{n}{k}
\right)\left(\frac{b^k \left(m!\right)^b }{k!}\right),
\end{equation}
has a probability of at least $1-\epsilon$ of satisfying \eqref{eq:no_err_m_users}. 
As the minimum number of partitions $T^*(n,k,b)$ satisfying \eqref{eq:no_err_m_users} must be less than or equal to any achievable $T$, we have 
\begin{equation}
T^*(n,k,b)  \leq   \ln \left(\binom{n}{k}
\right)\left(\frac{b^k \left(m!\right)^b }{k!}\right).
\end{equation}
Taking the logarithm, we find that the $R_{f}^{*}(n,k,b)$ must satisfy
\begin{equation}
R^{*}_{f}(n,k,b)  \leq  \log \left(\frac{b^k(m!)^b}{k!}\right)
+ \log \left( \ln  \binom{n}{k} \right) .
\end{equation} 
The first term in the above expression can be bounded by
\begin{IEEEeqnarray}{lRl}
\log \left(\frac{b^k(m!)^b}{k!}\right) & \leq & k \log \left( \frac{e}{m}\right) + \frac{k}{m} \log \left( m!\right) \\
& \leq & \frac{k}{m} \left(\frac{1}{2}\log\left( 2 \pi m\right) + \frac{\log(e)}{12m} \right),
\label{eq:log_p_inequality}
\end{IEEEeqnarray}
where we have used the fact that $k! > k^{k}e^{-k}$ for $k \geq 1$ in the first inequality and $m! < \sqrt{2 \pi} m^{m + \frac{1}{2}}e^{-m} e^{\frac{1}{12m}}$ for $m \geq 1$ in the second inequality. Finally, noting that $\binom{n}{k} \leq \frac{n^k}{k!}$, and $k =mb$ we have 
\begin{equation}
R^{*}_{f}(n,k,b) \leq b \left(\frac{1}{2}\log\left( 2 \pi m\right) + \frac{\log(e)}{12m} \right) \\
+ \log \left( \ln \left( \frac{n}{mb}\right) + 1 \right),
\end{equation}
which completes the proof.
\hfill$\blacksquare$

\subsection{Proof of Theorem \ref{thm:multi_slot_var_rate_achievable}}
We show achievability for $k=mb$. Since increasing $k$ can only 
increase the required rate, this also shows achievability for $k \le mb$. 
The main difference in this proof, as compared to the case of $b = k$ is that based on the argument in Proposition \ref{thm:multi_slot_fixed_rate_achievable}, we have
\begin{equation}
\mathrm{Pr}\left( \mathbf{A} \in \mathbf{C}\left( \part, m \right)\right) = \frac{k!}{(m!)^b b^k} \triangleq p.
\end{equation}
Following the same logic as in the proof of Theorem \ref{thm:achieve_source_coded}, we can use the entropy of a geometric distribution with parameter $p$ to bound the entropy of the minimum-entropy greedy encoder $H \left( f_{\hashfamily^*} \left( \mathbf{A}\right)\right)$. This results in the bound 
\begin{multline}
H\left( f_{\hashfamily^*}\left( \mathbf{A}\right)\right) \leq -\frac{1}{1- \epsilon}
\left( \log \left( \frac{k!}{(m!)^b b^k}\right) \right. \\
+\left. \left( \frac{(m!)^b b^k}{k!} - 1\right) \log \left( 1 - \frac{k!}{b^k (m!)^b}\right) \right).
\end{multline}
Using \eqref{eq:log_p_inequality} and noting that
\begin{multline}
 \left( \frac{(m!)^b b^k}{k!} - 1\right) \log \left( 1 - \frac{k!}{b^k (m!)^b}\right) \leq \\
 \left(\frac{1}{2}\log\left( 2 \pi m\right) + \frac{\log(e)}{12m} \right),
\end{multline}
we find that
\begin{equation}
H\left( f_{\hashfamily^*} \left( \mathbf{A}\right)\right) \leq \frac{1}{1- \epsilon} \cdot \left( b + \frac{1}{m}\right)\left( \frac{1}{2}\log(2\pi m) + \frac{\log(e)}{12m}\right),
\end{equation}
where we have substituted $k=mb$.
Finally, since $R^*_{v}(n,k,b)$, which is the minimal entropy over all collision-free encoders, must be less than $H\left( f_{\hashfamily^*} \left( \mathbf{A}\right)\right)$ and $\epsilon$ can be arbitrary, taking $\epsilon \rightarrow 0$ gives \eqref{eq:multi_slot_var_rate_achievable}.
\hfill$\blacksquare$

\subsection{Proof of Theorem \ref{thm:multi_slot_var_rate_volume}}
We begin by noting that $R^*_f(n,k,b)$ and $R^*_v(n,k,b)$ are non-decreasing functions of $n$. To this end, define $n' = \left \lfloor \frac{n}{k} \right \rfloor k$ so that we can bound $R^*(n,k,b)$ below by $R^*(n',k,b)$ for both the fixed and variable-length cases.

As in the case of $b=k$, we  note that the number of activity patterns covered by a single partition is maximized when there are an equal number of indices in each subset (or nearly so), so when $b = \frac{k}{m}$ we have 
\begin{equation}
\left \lvert \mathbf{C}\left( \part, m\right) \right \rvert \leq \binom{\frac{n'}{b}}{m}^{b} = c_{\max}.
\end{equation}
Thus, in order to cover all activity patterns, we must have
\begin{equation}
T^{*}(n,k,b) \geq \frac{\binom{n'}{k}}{\binom{\frac{n'}{b}}{m}^{b}}.
\end{equation}
Taking the logarithm and expressing in terms of rate, we find that
\begin{equation}\label{eq:m_slot_converse_f}
R^{*}_{f}(n',k,b) \geq \log \left( \frac{b^k \left( m!\right)^{b}}{k!}\right) - 
\log \left( \gamma \left(n',k,b \right)\right),
\end{equation}
where we have defined $\gamma(n',k,b) \triangleq \prod_{l=1}^{m} \frac{\left(n'-l{b}\right)^{b}}{\left( n'-l{b}\right)^{\underline{b}}}.$
For the variable-length case, the same technique from the case of $b=k$ in Theorem
\ref{thm:var_rate_vol} applies, except with this new $c_{\max}$. 
Taking the supremum of the bound \eqref{eq:log_p_vol_bound}, which is again achieved by $Q(\mathbf{A}_i) = \binom{n}{k}^{-1}$, yields
\begin{equation}\label{eq:m_slot_conv_var}
    R^*_{v}(n',k,b) \geq  \log\left( \frac{b^k(m!)^b}{k!}\right)
- \log\left(\gamma \left(n',k,b\right) \right),
\end{equation}
exactly matching the fixed-length case. 
Then, using the fact that $m! > \sqrt{2\pi} m^{m+ \frac{1}{2}}e^{-m}e^{\frac{1}{12m + 1}}$, and that $k! < \sqrt{2 \pi} k^{k + \frac{1}{2}} e^{-k}e^{\frac{1}{12k}}$, we have the following bound:
\begin{multline} \label{eq:log_p_inequality_gt}
 \log \left( \frac{b^k\left( m!\right)^b}{k!}\right) \geq \frac{k}{m} \left( \frac{1}{2} \log\left( 2 \pi m\right)  + \frac{\log(e)}{12m + 1}\right) \\
 - \frac{1}{2}\log \left( 2 \pi k\right) - \frac{\log(e)}{12 k}.
\end{multline}
Applying \eqref{eq:log_p_inequality_gt} to \eqref{eq:m_slot_converse_f} and \eqref{eq:m_slot_conv_var} 
and noting that $R^*(n,k,b) \geq R^{*}(n',k,b)$ for both the fixed and variable-length cases completes 
the proof.

Note that $\gamma(n',k,b)$ goes to $1$ as $n\rightarrow \infty$ for fixed $k$ and $b$, because 
\begin{equation}
\lim_{n \rightarrow \infty} \frac{\left( n'-lb\right)^b}{\left( n'-lb\right)^{\underline{b}}} = 1,
\end{equation} 
for each $l = 0, \dotsc, m-1$. 
For $k \ll n$, this says that the minimum feedback rate must be at least linear in $k$ for large $n$. 
\hfill$\blacksquare$

\subsection{Proof of Theorem \ref{thm:multi_slot_fixed_rate_exclude}}

The proof closely parallels the proof of Proposition \ref{thm:fixed_exclusion} for the case of $b=k$. There is at least one partition with $\left \lfloor \frac{n}{b}\right \rfloor$ elements. Following the same exclusion argument, we find
\begin{equation}
n \left( 1 - \frac{m}{k}\right)^{T} \leq k-1.
\end{equation}
This gives us the following bound on $T$:
\begin{equation}
T \geq \frac{\log(n) - \log(k-1)}{-\log \left( 1 - \frac{m}{k}\right)}.
\end{equation}
Writing this in terms of rate and noting that $-\log \left( 1 - \frac{m}{k}\right) \leq \frac{2m}{k}$ for $k > m$, we have
\begin{equation}
R^*_{f}(n,k,m) \geq \log \log \left( \frac{n}{k-1}\right) + \log(b) -1.
\end{equation}
\hfill$\blacksquare$

\bibliographystyle{IEEEtran}
\bibliography{IEEEabrv,references}

\end{document}